\documentclass[submission,copyright,creativecommons]{eptcs}
\pdfoutput=1
\usepackage{underscore}           

\usepackage{amsmath,amssymb,amsthm}
\usepackage{bbold}
\usepackage{mathtools}
\usepackage{etoolbox}
\usepackage{xifthen}
\usepackage{xcolor}

\usepackage{array}
\usepackage{graphicx}
\usepackage{subcaption}
\usepackage{pdflscape}
\usepackage{afterpage}

\usepackage{thmtools}

\usepackage{thm-restate}


\newtheorem{theorem}{Theorem}
\newtheorem{lemma}{Lemma}
\newtheorem{definition}{Definition}

\newtheorem{corollary}{Corollary}
\newtheorem{example}{Example}
\newtheorem{assumption}{Assumption}



\newenvironment{romanenumerate}{
\begingroup
\renewcommand\labelenumi{(\roman{enumi})}
\renewcommand\theenumi\labelenumi
\begin{enumerate}}{\end{enumerate}
\endgroup}

\colorlet{grey}{black!20!white}

\colorlet{h1color}{blue!40!black} 
\colorlet{h2color}{orange!90!black} 
\colorlet{h3color}{blue!40!white} 
\colorlet{h4color}{green!40!black} 


\newcommand{\ddInline}[1]{\xLeftarrow {\text{\tiny\;\; $#1$}\;}}

 \newcommand{\inputtikz}[1]{%
  \vcenter{\hbox{\includegraphics{diagrams/#1.pdf}}}%
 }

 \newcommand{\inputtikzNoVcenter}[1]{%
  \includegraphics{diagrams/#1.pdf}%
 }


\makeatletter
\def\bstr{b}
\def\bfstr{bf}
\def\cstr{c}
\def\fstr{f}
\def\lst{A,B,C,D,d,E,F,G,H,I,J,K,L,M,N,O,P,Q,R,S,T,U,V,W,X,Y,Z}

\newcommand{\MkB}[1]{\expandafter\def\csname\bstr#1\endcsname{\mathbb{#1}}}
  \@for\i:=\lst\do{%
    \expandafter\MkB \i     }
    
\newcommand{\MkBF}[1]{\expandafter\def\csname\bfstr#1\endcsname{\mathbf{#1}}}
  \@for\i:=\lst\do{%
    \expandafter\MkBF \i     }

\newcommand{\MkCal}[1]{\expandafter\def\csname\cstr#1\endcsname{\mathcal{#1}}}
  \@for\i:=\lst\do{%
    \expandafter\MkCal \i     }
    
\newcommand{\MkFrak}[1]{\expandafter\def\csname\fstr#1\endcsname{\mathfrak{#1}}}
  \@for\i:=\lst\do{%
    \expandafter\MkFrak \i     }
    
\makeatother



\newcommand{\pB}[1]{\mathsf{PB}(#1)}
\newcommand{\pO}[1]{\mathsf{PO}(#1)}


\newcommand{\mono}[1]{\mathsf{mono}(#1)}

\newcommand{\epi}[1]{\mathsf{epi}(#1)}

\newcommand{\mor}[1]{\mathsf{mor}(#1)}

\newcommand{\iso}[1]{\mathsf{iso}(#1)}

\newcommand{\obj}[1]{\mathsf{obj}(#1)}

\newcommand{\Lin}[1]{\mathsf{Lin}(#1)}

\newcommand{\LinAc}[1]{\overline{\mathsf{Lin}}(#1)}

\newcommand{\MatchGT}[3]{\mathsf{M}^{{\text{\tiny $#1$}}}_{#2}(#3)}


\newcommand{\tMatchGT}[3]{\mathsf{MT}^{#3}_{#1}(#2)}


\newcommand{\compGT}[4]{#2 {}^{#3}\!{\triangleleft}_{#1} #4}

\newcommand{\equivAbs}{\equiv_A}
\newcommand{\equivShift}{\equiv_S}


\newcommand{\TcompGT}[4]{#1{}^{#2}\!\!{\angle}_{#4} #3}

\newcommand{\ac}[1]{\mathsf{#1}} 

\newcommand{\Shift}{\mathsf{Shift}}

\newcommand{\Trans}{\mathsf{Trans}}

\newcommand{\mIO}{\varnothing}

\usepackage[ruled,vlined]{algorithm2e}


\title{Tracelets and Tracelet Analysis\\ Of Compositional Rewriting Systems}

\author{Nicolas Behr
\institute{%
	Universit\'{e} de Paris, IRIF, CNRS, F-75013 Paris, France
	\thanks{This project has received funding from the European Union's Horizon 2020 research and innovation programme under the Marie Sk\l{}odowska-Curie grant agreement No~753750.}
}
\email{nicolas.behr@irif.fr}
}


\begin{document}
\maketitle
\begin{abstract}
Taking advantage of a recently discovered associativity property of rule compositions, we extend the classical concurrency theory for rewriting systems over adhesive categories. We introduce the notion of tracelets, which are defined as minimal derivation traces that universally encode sequential compositions of rewriting rules. Tracelets are compositional, capture the causality of equivalence classes of traditional derivation traces, and intrinsically suggest a clean mathematical framework for the definition of various notions of abstractions of traces. We illustrate these features by introducing a first prototype for a framework of tracelet analysis, which as a key application permits to formulate a first-of-its-kind algorithm for the static generation of minimal derivation traces with prescribed terminal events.
\end{abstract}

\section{Motivation and relation to previous works}\label{sec:intro}

The analysis of realistic models of complex chemical reaction systems in organic chemistry and in systems biology poses considerable challenges, both in theory and in terms of algorithmic implementations. %
Two major classes of successful approaches include \emph{chemical graph rewriting}%
~\cite{Benk2003,banzhafetalDR20154968,Andersen2016,andersen2018rule}, and the %
\emph{rule-based modeling frameworks} \textsc{Kappa}~\cite{Danos2003aa,Danos2004aa,Danos2003ab,Danosaa,Danosab} and \textsc{BioNetGen}~\cite{Harris2016aa,Blinov2004aa}, respectively. %
These approaches  utilize well-established modern variants of \emph{Double-Pushout (DPO)}~\cite{ehrig1999handbook,ehrig2014mathcal} and \emph{Sesqui-Pushout (SqPO)} \cite{Corradini2006} rewriting frameworks over suitably chosen adhesive categories~\cite{lack2005adhesive} (and with additional constraints~\cite{habel2009correctness,ehrig2014mathcal} on objects and transitions for consistency). The sheer complexity of the spaces of distinct classes of objects and of active transitions thereof necessitated the development of specialized and highly optimized variants of static analysis techniques for these types of systems. As we will demonstrate in this paper, a novel class of such techniques is found to arise from a refocusing of the analysis from \emph{derivation traces} to so-called \emph{tracelets}. 

To provide some context, we briefly recall some basic notions of rewriting based upon a finitary adhesive category $\bfC$~\cite{lack2005adhesive}, such as e.g.\ the category $\mathbf{FinGraph}$ of finite directed multigraphs. Objects of this category provide the possible configurations or states of the rewriting system (typically considered up to isomorphism), while partial maps between objects (encoded as spans of monomorphisms) will provide the possible transitions, referred to as \emph{(linear) rules}. The application of a rule $O\xleftharpoonup{r}I$ to some object $X$ then requires the choice of an instance of a subobject $I$ within $X$, established via a monomorphism $m:I\hookrightarrow X$ called a \emph{match}, followed by replacing $m(I)\subset X$ with an instance $m^{*}(O)$ of $O$, where the precise details depend on the chosen rewriting semantics (i.e.\ \emph{Double-Pushout (DPO)}~\cite{ehrig1999handbook,ehrig2014mathcal} or \emph{Sesqui-Pushout (SqPO)}~\cite{Corradini2006} semantics). This process of rule application is traditionally referred to as a \emph{(direct) derivation}. The central structure studied in the concurrency theory and static analysis of the rewriting system consists in so-called \emph{derivation traces}: 
\begin{equation}\label{eq:lengthNDerTrace}
\inputtikz{lengthNDerTrace}
\end{equation}
Each transition in such a derivation trace from a state $X_i$ to a state $X_{i+1}$ is thus given by a direct derivation via a linear rule $r_i$ at a match $m_i$. A typical abstract encoding of rewriting systems is then provided in the form of a \emph{rewriting grammar}, whose data consists of an initial state $X_0$ and a set of linear rewriting rules, from which all possible derivation traces starting at $X_0$ are constructed. 

Static analysis of rewriting systems is traditionally based upon several notions of \emph{abstractions} of derivation traces. At a fundamental level, the category-theoretical definitions of rewriting are inherently invariant under various types of isomorphisms, which suggests a form of equivalence on derivation traces induced by isomorphisms referred to as \emph{abstraction equivalence}~\cite{Corradini1994}. The second major source of equivalences is based upon so-called \emph{sequential independence} of derivations~\cite{ehrig1999handbook,ehrig2014mathcal,Corradini2006,BALDAN2014}: again leaving technicalities aside, if two ``adjacent'' direct derivations $X_{i+1}\ddInline{r_{i+1},m_{i+1}}X_i$ and $X_i\ddInline{r_i,m_i}X_{i-1}$ in a given derivation trace are sequentially independent, there exist matches $m_i'$ and $m_{i+1}'$ so that 
\[
X_{i+1}\ddInline{r_{i},m_{i'}}X_i \quad \text{and}\quad X_i\ddInline{r_{i+1},m_{i+1}'}X_{i-1}
\]
constitute sequential derivations in the opposite order of application. Lifting this notion to sequences of an arbitrary finite number of consecutive derivations yields an abstraction equivalence called \emph{shift equivalence}~\cite{Kreowski1987,ehrig1999handbook,ehrig2006fund}. Quotienting a given grammar by a combination of abstraction and shift equivalence leads to the sophisticated frameworks of occurrence grammars~\cite{BALDAN1999aa,baldan2000modelling} as well as (equivalently~\cite{Baldan2000,Baldan2007aa,Baldan2009aa}) of processes and unfoldings~\cite{Baldan1998aa,Baldan2006aa,BALDAN2014}. Quintessentially, since sequential commutativity induces a preorder on derivations of a grammar, the aforementioned well-established static analysis techniques encode the causal relationships of derivations according to this preorder. 

Of particular interest in view of practical applications of such techniques to chemical and biochemical reaction systems (via \emph{chemical graph rewriting}~\cite{Benk2003,banzhafetalDR20154968,Andersen2016,andersen2018rule}, and via the \emph{rule-based modeling} frameworks \textsc{Kappa}~\cite{Danos2003aa,Danos2004aa,Danos2003ab,Danosaa,Danosab} and \textsc{BioNetGen}~\cite{Harris2016aa,Blinov2004aa}) are concepts that permit to extract high-level information on the causal properties of the typically immensely complex transition sets and state spaces encountered in real-life reaction systems in an automated fashion. In the setting of systems chemistry, taking full advantage of the highly constrained type of rewriting relevant to model molecules and possible reactions (i.e.\ a flavor of DPO rewriting in which vertices modeling atoms are preserved throughout transitions), a highly efficient analysis technique based upon mapping of reaction networks into multi-hypergraphs and modeling pathways as integer hyperflows has been developed in~\cite{Fagerberg2018,andersen2018towards,Andersen2019}. An essential role in this framework is played by compositions of chemical graph rewriting rules~\cite{Andersen2013,Andersen2014,andersen2018rule}, which have been implemented algorithmically in~\cite{Andersen2016}. The tracelets as introduced in this paper may be seen as a formalization of these ideas of understanding pathways as particular rule compositions, which in particular answers an open question on the associativity of compositions of such pathways to the affirmative.

In the biochemistry setting, important developments include sophisticated specializations of the aforementioned static analysis techniques for general rewriting systems to the relevant setting of site-graph rewriting in order to extract information on cellular signaling pathways~\cite{Danosaa,Danosab,Danosac}, the notion of refinements~\cite{danos2008rule}, techniques of model reduction based on the differential semantics of the stochastic transition systems~\cite{danos2010abstracting} and notions of trace compression~\cite{danos2012graphs}. In particular, so-called \emph{strong compression} as introduced in~\cite{danos2012graphs} will play an interesting role also in our tracelet framework. While the theory of static analysis of such complex rewriting systems is thus rather well-developed, several open problems remain. Referring to~\cite{Boutillier2018aa} for a recent review, at present the established approach to the generation of pathways for biochemical reaction systems passes through extensive simulation runs in order to generate large ensembles of derivation traces of the given system, which then have to be curated and suitably compressed in order to extract the static information constituting the pathways of interest. This dependence on a posteriori analyses of derivation traces hinders the efficiency of the algorithms considerably, since typically only a small portion of the information contained in a given trace gives rise to useful information on pathways. We will develop in the following an alternative approach to the static analysis of rewriting systems that aims to avoid precisely this bottleneck in the synthesis of pathways.

The \textbf{main contribution} of this paper consists in an alternative paradigm for the static analysis of rewriting systems, which emphasizes the notion of sequential rule compositions over that of derivation traces. Our development hinges on two central theorems of rewriting theory: a theorem describing the relationship between two-step sequences of direct derivations and the underlying rule compositions, the so-called \emph{concurrency theorem} (well-known in the DPO setting~\cite{ehrig1999handbook,ehrig2006fund,ehrig2014mathcal}, only recently established in the SqPO setting~\cite{nbSqPO2019,bk2019a}), and an equally recently proved~\cite{bdg2016,bp2018,nbSqPO2019,bk2019a} theorem establishing a form of \emph{associativity} of the operation of rule compositions. The combination of these two results admits to characterize derivation traces \emph{universally} via so-called \emph{tracelets}, in the sense that each trace of length $n$ applied to an initial object $X_0$ may be obtained as the extension of a \emph{minimal} derivation trace of length $n$ into the context of the object $X_0$. Referring to Figure~\ref{fig:traceletoverview} for an overview, one may shift focus onto the tracelets themselves as the objects to analyze in a given rewriting system, since they encode all relevant information in terms of the causality of derivation traces. From a technical perspective, since a tracelet is nothing but a special type of derivation trace, all of the traditional analysis techniques on derivation traces remain applicable. At the same time, tracelets may be naturally equipped with a notion of associative composition, which opens novel possibilities in view of static pathway generation in the aforementioned (bio-) chemical rewriting system settings.

\paragraph{Plan of the paper:} In Section~\ref{sec:tSetup}, the core tracelet formalism is established, providing the precise definitions of the concepts summarized in the schematic Figure~\ref{fig:traceletoverview}. Section~\ref{sec:ta} is devoted to developing \emph{tracelet analysis}, based in part upon the aforementioned static analysis techniques for derivation traces. As a first application of our framework, we present a prototypical \emph{Feature-driven Explanatory Tracelet Analysis (FETA)} algorithm in Section~\ref{sec:FETA}. Since our framework is heavily based upon our very recent developments in the field of compositional rewriting, we provide a technical appendix containing a collection of illustrative figures and of requisite technical definitions and results.

\section{Tracelets for compositional rewriting theories}
\label{sec:tSetup}

\begin{assumption}\label{as:main}
Throughout this paper, we fix\footnote{Although especially in the DPO-type rewriting case more general settings would be admissible while retaining compositionality of the rewriting (see~\cite{bk2019a} for further details), the present choice covers many cases of interest, is a sufficient setting also for compositional Sesqui-Pushout (SqPO) rewriting, and overall strikes a good balance of generality vs.\ simplicity.} a category $\bfC$ that satisfies:
\begin{itemize}
\item $\bfC$ is \textbf{adhesive}~\cite{lack2005adhesive}
\item $\bfC$ possesses an \textbf{epi-mono-factorization}~\cite{habel2009correctness} (i.e.\ every morphism $f\in\mor{\bfC}$ can be factorized into the form $f=m\circ e$, with $m\in \mono{\bfC}$ and $e\in \epi{\bfC}$)
\item $\bfC$ possesses a \textbf{strict initial object} $\mIO\in \obj{\bfC}$~\cite{lack2005adhesive} (i.e.\ an object such that for every $X\in \obj{\bfC}$, there exists a unique monomorphism $\mIO\rightarrow X$, and for every $Y\in \obj{\bfC}$, if there exists a morphism $Z\rightarrow \mIO$, then it is an isomorphism).
\item $\bfC$ is \textbf{finitary}, i.e.\ for every object $X\in \obj{\bfC}$, there exist only finitely many monomorphisms $Z\rightarrow X$ into $X$ (and thus only finitely many subobjects of $X$).
\end{itemize}
\end{assumption}
Categories satisfying Assumption~\ref{as:main} have a number of properties that are of particular importance in view of compositionality of rewriting rules (cf.\ Appendix~\ref{app:ACprops}). A prototypical example of a category satisfying all of the assumptions above is the finitary restriction $\mathbf{FinGraph}$ of the category $\mathbf{Graph}$ of \emph{directed multigraphs}~\cite{Braatz2010aa}. We collect in Appendix~\ref{app:CR} the necessary background material on compositional DPO- and SqPO-type rewriting for rules with conditions~\cite{bdg2016,bp2018,nbSqPO2019,bk2019a}, and will freely employ the standard notations therein.

\begin{definition}[Tracelets]\label{def:genT}
	Let $\bT\in\{DPO,SqPO\}$ be the type of rewriting, and let $\LinAc{\bfC}$ denote the set of linear rules with conditions over $\bfC$ (cf.\ Definition~\ref{def:RwCs}). 
	\begin{itemize}
		\item \textbf{Tracelets of length $1$:} the set $\cT^{\bT}_1$ of type $\bT$ tracelets $T(R)$ of length $1$ is defined as
\begin{equation}
	\cT^{\bT}_1:=\left.\left\{
					T(R)=\inputtikz{tlLone}
				\right\vert
				R=(r,\ac{c}_I)\in \LinAc{\bfC}
			\right\}\,.
		\end{equation}
	\item \textbf{Tracelets of length $n+1$:} given tracelets $T_{n+1}\in \cT^{\bT}_1$ of length $1$ and $T_{n\cdots 1}\in \cT^{\bT}_n$ of length $n$ (for $n\geq1$), we define a span of monomorphisms ${\color{h1color}\mu}={\color{h1color}(I_{n+1}\hookleftarrow M\hookrightarrow O_{n\cdots 1})}$ as \emph{$\bT$-admissible match}, denoted $\mu\in \tMatchGT{T_1}{T_{n\cdots 1}}{\bT}$, if the following diagram is constructable:
	\begin{equation}\label{eq:tlGenA}
		\inputtikz{tlGen}
	\end{equation}
	Here, the square marked ${\color{h1color}\mathsf{PO}}$ is constructed as a pushout, followed by performing the ${\color{h2color}\bT}$- and ${\color{h2color}DPO^{\dag}}$-type direct derivations as indicated to form the lower part of the diagram. The latter operation may fail, either by non-existence of the requisite pushout complements (cf.\ Definition~\ref{def:rew}), or, if all POCs exist, because the tentative composite condition ${\color{h2color}\ac{c}_{I_{(n+1)\cdots 1}}}$ might evaluate to $\ac{false}$, with ${\color{h2color}\ac{c}_{I_{(n+1)\cdots 1}}}$ computed as 
	\begin{equation}
	\begin{aligned}
		\ac{c}_{I_{(n+1)\cdots}}&:=\Shift(I_{n\cdots 1}\hookrightarrow {\color{h2color}I_{(n+1)\cdots 1}},\ac{c}_{I_{n\cdots 1}})\\
		&\quad \bigwedge \;
		\Trans({\color{h1color}Y^{(n+1)}_{n+1,n}}\Leftarrow {\color{h2color}I_{(n+1)\cdots 1}},\Shift(I_{n+1}\hookrightarrow {\color{h1color}Y^{(n+1)}_{n+1,n}},\ac{c}_{I_{n+1}}))\,.
	\end{aligned}
	\end{equation}
	If ${\color{h1color}\mu}\in \tMatchGT{T_1}{T_{n\cdots 1}}{\bT}$, we define the tracelet $\TcompGT{T_{n+1}}{{\color{h1color}\mu}}{T_{n\cdots 1}}{\bT}$ of length $n+1$ as
	\begin{equation}\label{eq:tlGenB}
		\TcompGT{T_{n+1}}{{\color{h1color}\mu}}{T_{n\cdots 1}}{\bT}:=
		\inputtikz{tlGenB}
	\end{equation}
	We define the set $\cT^{\bT}_{n+1}$ of \emph{type $\bT$ tracelets of length $n+1$} as
	\begin{equation}
		\cT^{\bT}_{n+1}:=\left.\left\{
		\TcompGT{T_{n+1}}{{\color{h1color}\mu}}{T_{n\cdots 1}}{\bT}
		\right\vert 
			T_{n+1}\in \cT^{\bT}_1\,,\;
			T_{n\cdots 1}\in \cT^{\bT}_{n}\,,\;
			\mu\in \tMatchGT{T_1}{T_{n\cdots 1}}{\bT}
		\right\}\,.
	\end{equation}
	\end{itemize}
For later convenience, we introduce the \emph{tracelet evaluation operation} $[[.]]$, 
\begin{equation}\label{eq:Tev}
	[[.]]:\cT^{\bT}\rightarrow \LinAc{\bfC}: \cT^{\bT}_n\ni T \mapsto[[T]]:=((O_{n\cdots 1}\leftharpoonup I_{n\cdots 1}),\ac{c}_{I_{n\cdots 1}})\,,
\end{equation}
with $\cT^{\bT}:=\bigcup_{n\geq 1}\cT^{\bT}_n$, and where $(O_{n\cdots 1}\leftharpoonup I_{n\cdots 1})$ denotes the span composition (cf.~\eqref{eq:tlGenA})
\begin{equation}
(O_{n\cdots 1}\leftharpoonup I_{n\cdots 1}):=
(O_{n\cdots 1}\Leftarrow Y^{(n)}_{n,n-1})\circ \cdots
\circ (Y^{(n)}_{2,1}\Leftarrow I_{n\cdots 1})\,.
\end{equation}
\end{definition}

A first example of a tracelet of length $3$ generated iteratively from tracelets of length $1$ is given in Figure~\ref{fig:traceletAssociativityC}, with the relevant computation presented in (the top half of) Figure~\ref{fig:traceletAssociativityB}. The example illustrates a sequential composition of graph rewriting rules, with vertex symbols and edge colors used purely to encode the structure of the various morphisms and rules, i.e.\ repeated symbols mark objects identified by the partial morphisms. Note that since in this example no vertices are deleted without explicitly deleting the incident edges, too, this example constitutes a valid composition in both the DPO- and the SqPO-type frameworks. 

Another very important aspect visualized in Figure~\ref{fig:traceletAssociativityB} is the \emph{associativity} property of the underlying rule compositions: the top half of the figure represents a composition of $r_2$ with $r_1$ (yielding the tracelet of length $2$ highlighted in blue), followed by a further composition of $r_3$ with the composite of $r_2$ and $r_1$. By the associativity theorem for compositional rewriting theories (Theorem~\ref{thm:assocR}), there exist suitable overlaps such that the outcome of the aforementioned operation may be equivalently obtained by composing $r_3$ with $r_2$ (yielding the tracelet of length $2$ highlighted in yellow), and by pre-composing the composite with $r_1$. Vertically composing squares in each half of Figure~\ref{fig:traceletAssociativityB}, one may verify that this associativity property on rule compositions extends to an associativity property on tracelet compositions, as both halves of the figure yield the same tracelet of length $3$. These observations motivate the following extension of the definition of $\TcompGT{.}{.}{.}{\bT}$:
\begin{definition}[Tracelet composition]\label{def:tlComp}
For tracelets $T',T\in\cT^{\bT}$ of lengths $m$ and $n$, respectively, a span of monomorphisms ${\color{h1color}\mu}=(I'_{m\cdots 1}\hookleftarrow M\hookrightarrow O_{n\cdots 1})$ is defined to be an \emph{admissible match of $T$ into $T'$}, denoted $\mu\in \tMatchGT{T'}{T}{\bT}$, if (i) all requisite pushout complements exist to form the type $DPO^{\dag}$ derivations (in the sense of rules without conditions) to construct the diagram in~\eqref{eq:TcompDefA} below, where $p:=m+n+1$,
\begin{subequations}
\begin{equation}\label{eq:TcompDefA}
\inputtikz{tlCompA}
\end{equation}
and if (ii) the condition ${\color{h2color}\ac{c}_{I_{(m+n+1)\cdots 1}}}$ as in~\eqref{eq:TcompDefB} below does not evaluate to $\ac{false}$:
\begin{equation}
\begin{split}
	{\color{h2color}\ac{c}_{I_{(m+n+1)\cdots}}}
		&:=\Shift(I_{n\cdots 1}\hookrightarrow {\color{h2color}I_{(m+n+1)\cdots 1}},\ac{c}_{I_{n\cdots 1}})\\
		&\quad \bigwedge \;
		\Trans({\color{h1color}Y^{(m+n+1)}_{n+1,n}}\Leftarrow {\color{h2color}I_{(m+n+1)\cdots 1}},\Shift(I_{m\cdots 1}\hookrightarrow {\color{h1color}Y^{(n+1)}_{n+1,n}},\ac{c}_{I_{m\cdots 1}}))\,.
\end{split}\label{eq:TcompDefB}
\end{equation}
\end{subequations}
Then for ${\color{h1color}\mu}\in \tMatchGT{T'}{T}{\bT}$, we define the \emph{type $\bT$ tracelet composition of $T'$ with $T$ along $\mu$} as
\begin{equation}\label{eq:tlCompB}
	\TcompGT{T'}{{\color{h1color}\mu}}{T}{\bT}:=
	\inputtikz{tlCompB}\,.
\end{equation}
\end{definition}


\begin{figure}
  \begin{subfigure}[t]{\textwidth}
  \centering
    \includegraphics[scale=0.35]{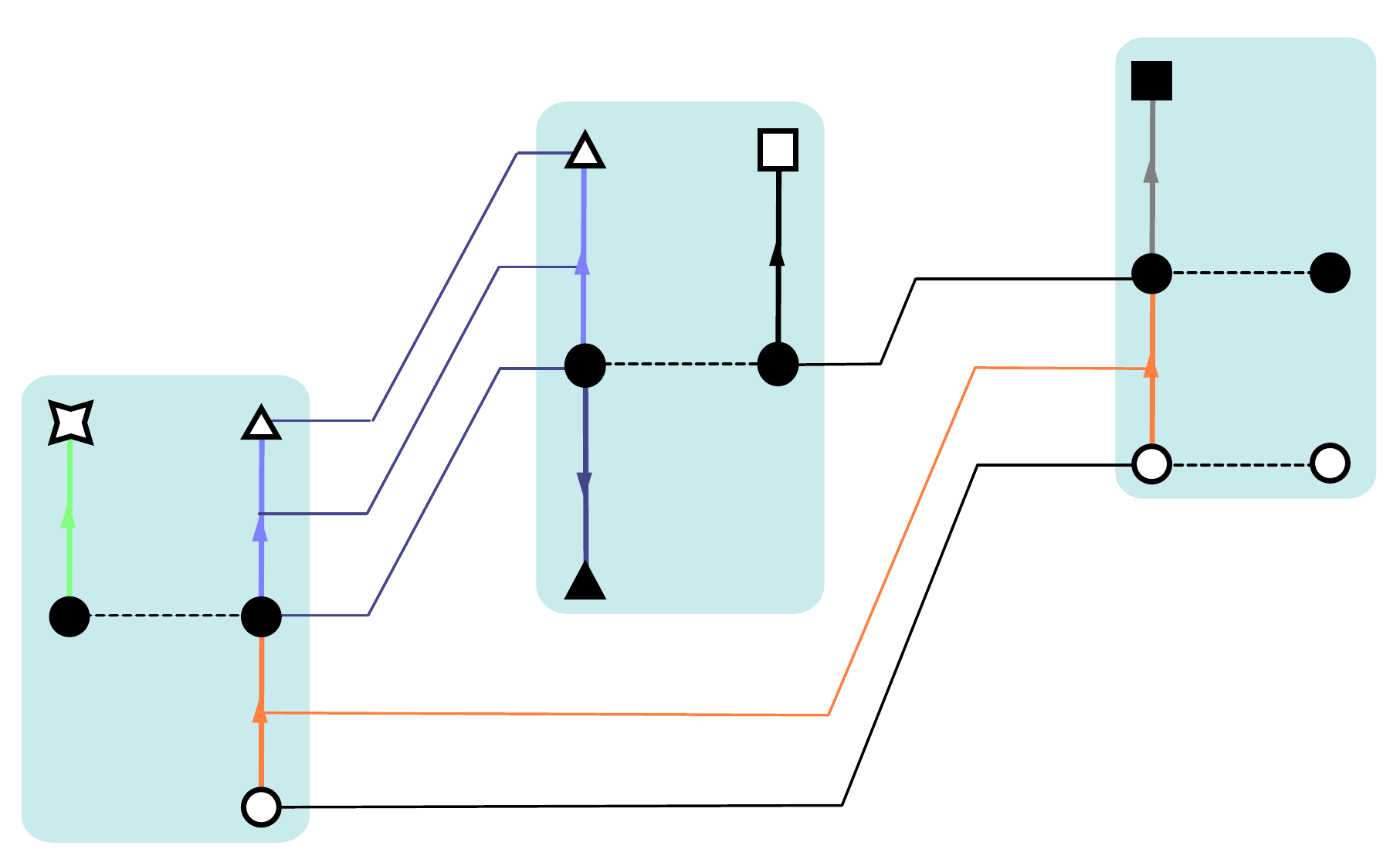}
    \caption{\label{fig:traceletAssociativityA}Three rules sequentially composed (from right to left): input and output interfaces are drawn explicitly, while the context graphs $K_j$ are implicitly encoded as subgraphs of $O_j$ and $I_j$ joined by dotted lines (for $j=1,2,3$). The structure of the matches of the rules is indicated via lines connecting elements of outputs to elements of inputs of rules.}
  \end{subfigure}\\
   \begin{subfigure}[t]{\textwidth}
   \centering
    \includegraphics[scale=0.3]{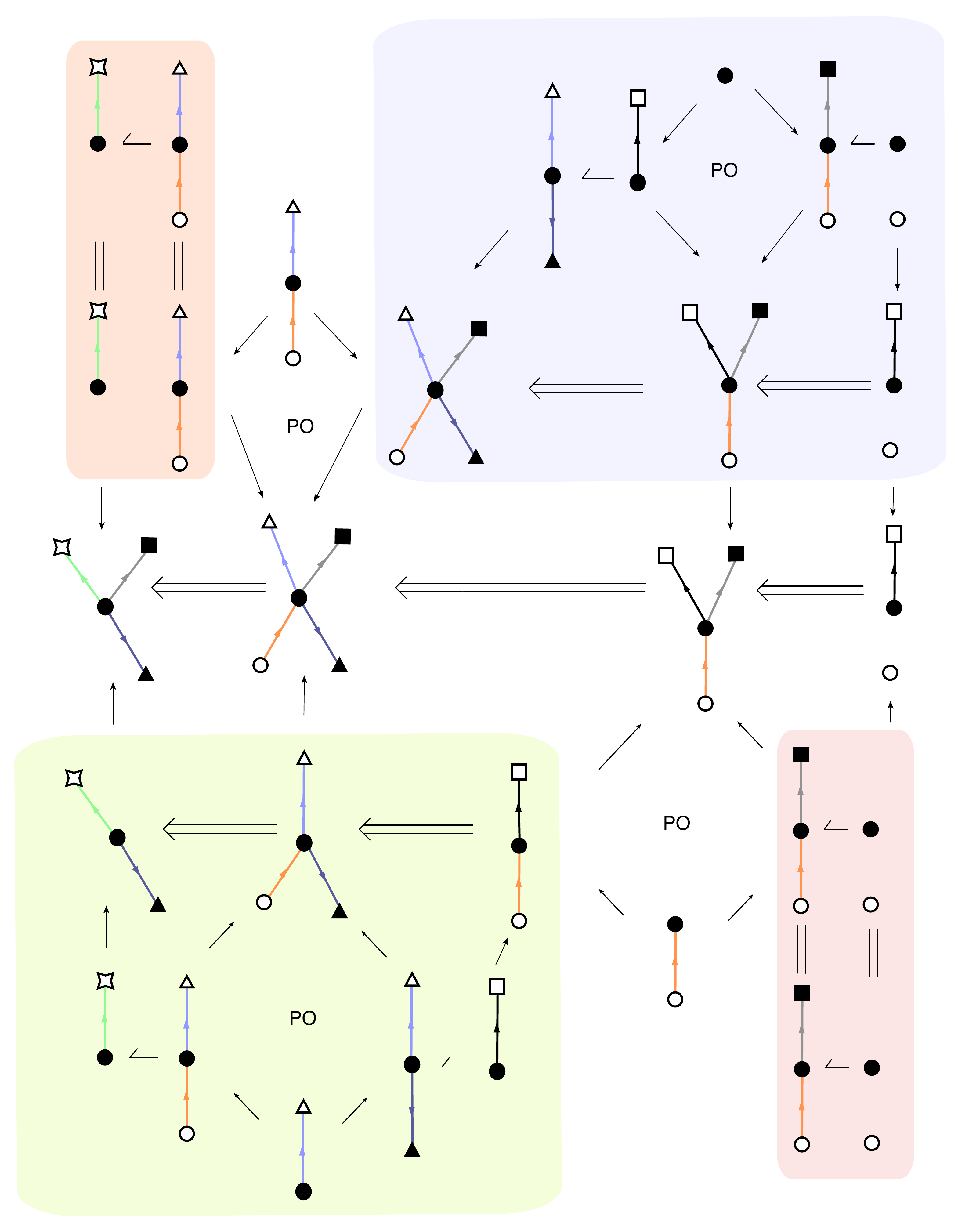}
    \caption{\label{fig:traceletAssociativityB}Explicit demonstration of the \emph{associativity property} of the rule composition operation: the top half of the diagram encodes a composition of the shape $\compGT{}{r_3}{}{(\compGT{}{r_2}{}{r_3})}$, while the bottom half encodes $\compGT{}{(\compGT{}{r_3}{}{r_2})}{}{r_1}$, with both operations for the overlaps depicted leading to the same minimal trace (up to isomorphisms). The tracelet of length $3$ equivalently encoded by both halves of the diagram is obtained by composition of squares.}
  \end{subfigure}
  \begin{subfigure}[t]{\textwidth}
  \centering
    \includegraphics[scale=0.35]{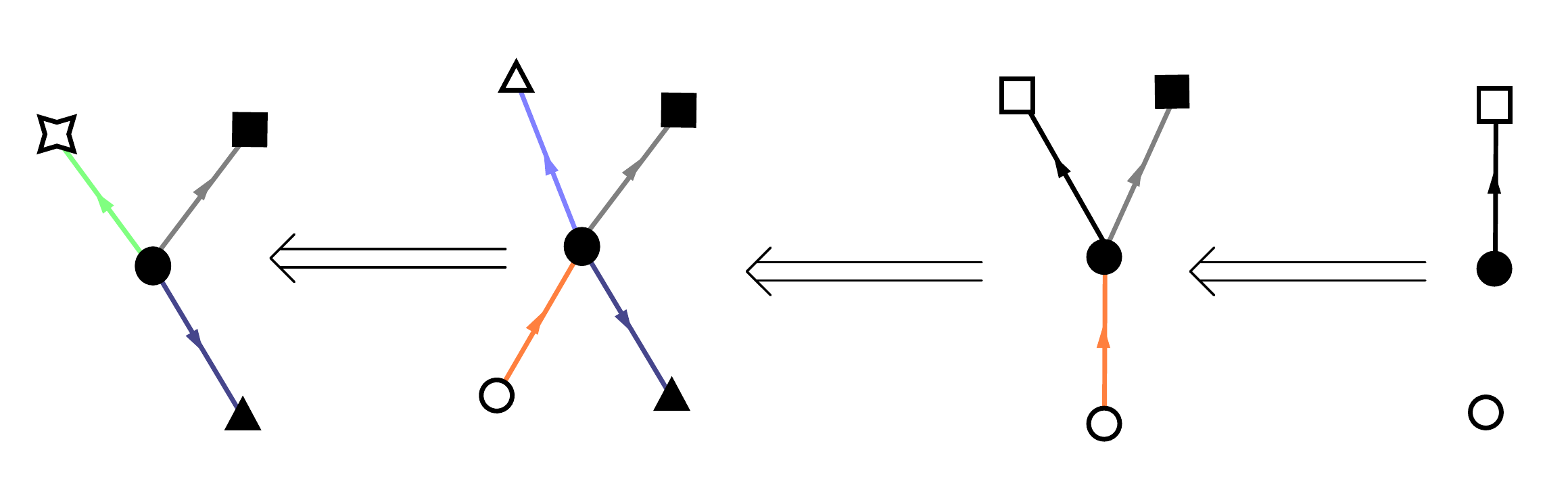}
    \caption{\label{fig:traceletAssociativityC}The minimal trace of length $3$ encoded in Fig.~\ref{fig:traceletAssociativityB}.}
  \end{subfigure}\\
  \caption{Illustration of the relationship between associativity and tracelets.}\label{fig:traceletAssociativity}
\end{figure}


Next, the precise relationship between $\bT$-type rule and tracelet compositions is clarified.
\begin{theorem}\label{thm:TmainA}
Let $\compGT{\bT}{.}{.}{.}$ denote the $\bT$-type rule composition (Definition~\ref{def:RwCs}), and let the set of $\bT$-admissible matches be denoted by $\MatchGT{\bT}{r_2}{r_1}$ (for $r_2,r_1\in \LinAc{\bfC}$).
\begin{romanenumerate}
\item\label{thm:TmainApartI} For all $T',T\in \cT^{\bT}$, $\tMatchGT{T'}{T}{\bT}=\MatchGT{\bT}{[[T']]}{[[T]]}$.
\item\label{thm:TmainAii} For all $T',T\in\cT^{\bT}$ and $\mu\in \tMatchGT{T'}{T}{\bT}$, $\left[\left[\TcompGT{T'}{\mu}{T}{\bT}\right]\right]
  =\compGT{\bT}{[[T']]}{\mu}{[[T]]}$.
\item\label{thm:TmainAiii} The $\bT$-type tracelet composition is \emph{\textbf{associative}}, i.e.\ for any three tracelets $T_1,T_2,T_3\in \cT^{\bT}$, there exists a bijection $\varphi:S_{3(21)}\xrightarrow{\cong}S_{(32)1}$ between the sets of pairs of $\bT$-admissible matches of tracelets (with $T_{ji}:=\TcompGT{T_j}{\mu_{ji}}{T_i}{\bT}$ and using property (i))
\begin{equation}
\begin{aligned}
	S_{3(21)}&:=\{ (\mu_{21},\mu_{3(21)})\vert 
		\mu_{21}\in \MatchGT{\bT}{[[T_2]]}{[[T_1]]}\,,\; 
		\mu_{3(21)}\in  \MatchGT{\bT}{[[T_3]]}{[[T_{21}]]}\\
	S_{(32)1}&:=\{ (\mu_{32},\mu_{(32)1})\vert 
		\mu_{32}\in \MatchGT{\bT}{[[T_3]]}{[[T_2]]}\,,\; 
		\mu_{(32)1}\in \MatchGT{\bT}{[[T_{32}]]}{[[T_1]]}\}
\end{aligned}
\end{equation} 
such that for all $(\mu_{32}',\mu_{(32)1}')=\varphi((\mu_{21},\mu_{3(21)}))$
\begin{equation}
\TcompGT{T_3}{\mu_{3(21)}}{\left(\TcompGT{T_2}{\mu_{21}}{T_1}{\bT}\right)}{\bT}
\cong \TcompGT{\left(\TcompGT{T_3}{\mu'_{32}}{T_2}{\bT}\right)}{\mu_{(32)1}'}{T_3}{\bT}\,.
\end{equation}
Moreover, the bijection $\varphi$ coincides with the corresponding bijection provided in the associativity theorem for $\bT$-type rule compositions (Theorem~\ref{thm:assocR}). (\textbf{Proof:} Appendix~\ref{app:TmainAproof})
\end{romanenumerate}
\end{theorem}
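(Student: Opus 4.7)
The plan is to reduce each of the three parts to the corresponding statement at the level of rules with conditions, exploiting the fact that a tracelet is essentially a record of the intermediate objects generated by an iterated rule composition, while the tracelet evaluation $[[\cdot]]$ forgets precisely this intermediate data. Throughout, I would work with the cubical diagrams of compositional rewriting summarized in Appendix~\ref{app:CR}, and use the pasting/pushout properties of adhesive categories (Assumption~\ref{as:main}) without further comment.

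For part~\ref{thm:TmainApartI}, I would unfold both $\tMatchGT{T'}{T}{\bT}$ (Definition~\ref{def:tlComp}) and $\MatchGT{\bT}{[[T']]}{[[T]]}$ and check that both ask for the same span $\mu=(I'_{m\cdots 1}\hookleftarrow M\hookrightarrow O_{n\cdots 1})$ to admit the same pushout at the top of diagram~\eqref{eq:TcompDefA} and the same subsequent $DPO^\dag$/$\bT$-type pushout complements. Since $[[T]]$ and $[[T']]$ are, by definition~\eqref{eq:Tev}, the span compositions across the intermediate $Y$-objects of $T$ and $T'$, these are literally the same data. The condition test~\eqref{eq:TcompDefB} then matches the composite condition of $\compGT{\bT}{[[T']]}{\mu}{[[T]]}$ by the standard compatibility of $\Shift$ and $\Trans$ with span composition (recorded in Appendix~\ref{app:CR}), so admissibility at tracelet and rule level coincide. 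Part~\ref{thm:TmainAii} then drops out by pushout pasting: the bottom strip of~\eqref{eq:TcompDefA} composes, layer by layer, to the unique (up to iso) $\bT$-type derivation on $[[T']]$ and $[[T]]$ along $\mu$, and collating this with the composite condition from part~\ref{thm:TmainApartI} yields precisely $\compGT{\bT}{[[T']]}{\mu}{[[T]]}$.

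Part~\ref{thm:TmainAiii} is the substantive part, but the strategy is to lift the associativity theorem for $\bT$-type rule composition (Theorem~\ref{thm:assocR}) through parts~\ref{thm:TmainApartI} and~\ref{thm:TmainAii}. Given $(\mu_{21},\mu_{3(21)})\in S_{3(21)}$, I would use part~\ref{thm:TmainApartI} to view this pair as a pair of rule-admissible matches for $[[T_1]],[[T_2]],[[T_3]]$; apply Theorem~\ref{thm:assocR} to obtain a pair $(\mu_{32}',\mu_{(32)1}')$ on the rule side; then transport the pair back through part~\ref{thm:TmainApartI} to $S_{(32)1}$. Defining $\varphi$ this way makes the claimed coincidence with the rule-level bijection tautological. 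To establish the isomorphism of the two resulting length-three tracelets, I would apply part~\ref{thm:TmainAii} to see that they both evaluate to isomorphic composite rules, and then invoke universality of the pushout/POC constructions used at each tracelet layer to promote this evaluation-level isomorphism to an iso of all intermediate $Y^{(p)}_{j,j-1}$ data.

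The main obstacle, as I see it, lies in this last promotion step: tracelets retain strictly more structure than their evaluations, and it is not a priori clear that two tracelet compositions yielding isomorphic composite rules must be iso as tracelets. The way around this is to observe that the associativity theorem for rule compositions is itself proved by a cube-of-pushouts argument, in which each witnessing square is a pushout or a (final) pullback complement and is therefore unique up to unique isomorphism. Tracing this argument through, the induced isomorphisms at the top and bottom of each of the $m+n+1$ layers are compatible with the horizontal span morphisms, so they assemble into a coherent iso of length-three tracelets. Once this coherence is in place, the bijection property of $\varphi$ reduces to the bijection statement already contained in Theorem~\ref{thm:assocR}, and the proof is complete.
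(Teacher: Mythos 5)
Your proposal follows essentially the same route as the paper: parts (i) and (ii) are obtained by vertically pasting the layers of the tracelet-composition diagram so that it collapses onto the rule composition of the evaluations $[[T']]$ and $[[T]]$ (the paper packages exactly this reduction as Corollary~\ref{cor:concur}), and part (iii) is then lifted from the rule-level associativity theorem (Theorem~\ref{thm:assocR}) through parts (i) and (ii). Your additional care in promoting the evaluation-level isomorphism of part (iii) to an isomorphism of the full tracelets --- via uniqueness up to unique isomorphism of the pushout/POC/FPC squares in each layer --- makes explicit a step that the paper's one-sentence argument leaves implicit, and is consistent with it.
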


Finally, combining the associativity results for rule and tracelet compositions with the so-called concurrency theorems for compositional rewriting theories, we find a characterization of derivation traces via tracelets and vice versa:
\begin{theorem}[Tracelet characterization]\label{thm:tlChar}
	For all type-$\bT$ tracelets $T\in \cT^{\bT}_n$ of length $n$, for all objects $X_0$ of $\bfC$, and for all monomorphisms $(m:I_{n\cdots 1}\hookrightarrow X_0)$ such that $m\in \MatchGT{\bT}{[[T]]}{X_0}$, there exists a type-$\bT$ direct derivation $D=T_m(X_0)$ as depicted below right obtained via vertically composing the squares in each column of the diagram below left:
\begin{equation}\label{eq:tlCharAux}
	\inputtikz{tlCharA}\; \leftrightsquigarrow\;
\inputtikz{tlCharA2}
\end{equation}
Conversely, every $\bT$-direct derivation $D$ of length $n$ along rules $R_j=(r_j,\ac{c}_{I_j})\in \LinAc{\bfC}$ starting at an object $X_0$ of $\bfC$ may be cast into the form $D=T_m(X_0)$ for some tracelet $T$ of length $n$ and a $\bT$-admissible match $m\in \MatchGT{\bT}{[[T]]}{X_0}$ that are uniquely determined from $D$ (up to isomorphisms). (\textbf{Proof:} Appendix~\ref{app:tlChar})
\end{theorem}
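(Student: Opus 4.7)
The plan is to proceed by induction on the tracelet length $n$, using the concurrency theorem for compositional $\bT$-type rewriting (recalled in the appendix) together with parts (i) and (ii) of Theorem~\ref{thm:TmainA}. For the \emph{forward direction}, the base case $n=1$ is immediate: the tracelet $T=T(R_1)$ consists of a single span with identity monomorphisms, and the left diagram of~\eqref{eq:tlCharAux} then reduces to the single square of the direct derivation at $(R_1,m)$, which is valid by the hypothesis $m\in\MatchGT{\bT}{[[T]]}{X_0}=\MatchGT{\bT}{R_1}{X_0}$. For the inductive step at length $n\geq 2$, decompose $T=\TcompGT{T_n}{\mu_n}{T_{(n-1)\cdots 1}}{\bT}$. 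By Theorem~\ref{thm:TmainA}(ii), $[[T]]=\compGT{\bT}{[[T_n]]}{\mu_n}{[[T_{(n-1)\cdots 1}]]}$, so the hypothesis $m\in\MatchGT{\bT}{[[T]]}{X_0}$ falls under the concurrency theorem. The latter supplies a unique-up-to-iso match $m'\in\MatchGT{\bT}{[[T_{(n-1)\cdots 1}]]}{X_0}$ together with a factorization of the one-step derivation at $([[T]],m)$ into a direct derivation $X_{n-1}\Leftarrow X_0$ along $[[T_{(n-1)\cdots 1}]]$ at $m'$ followed by a direct derivation $X_n\Leftarrow X_{n-1}$ along $R_n=[[T_n]]$. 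Invoking the inductive hypothesis on $T_{(n-1)\cdots 1}$ and $m'$ yields the first $n-1$ direct derivations along $R_1,\ldots,R_{n-1}$; appending the final step produces $D$. The assertion that $D$ arises by vertical composition of squares in~\eqref{eq:tlCharAux} then amounts to pasting the pushout squares (and final pullback squares in the SqPO case) supplied by the concurrency theorem at each iteration, which is a routine application of standard pasting lemmas in adhesive categories.

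For the \emph{converse direction}, I would build the tracelet incrementally from the left. Given a $\bT$-direct derivation $D$ of length $n$ along $R_1,\ldots,R_n$ at matches $m_1,\ldots,m_n$ from $X_0$ with intermediate states $X_0,X_1,\ldots,X_n$, set $T^{(1)}:=T(R_1)$ and $m^{(1)}:=m_1$, so that the forward construction $T^{(1)}_{m^{(1)}}(X_0)$ coincides with the first step of $D$. Inductively, assume $T^{(k)}\in\cT^{\bT}_k$ and $m^{(k)}\in\MatchGT{\bT}{[[T^{(k)}]]}{X_0}$ have been constructed so that the forward direction reproduces the first $k$ steps of $D$; in particular, the one-step derivation along $[[T^{(k)}]]$ at $m^{(k)}$ produces $X_k$. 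The two-step sequence $X_{k+1}\Leftarrow X_k\Leftarrow X_0$ along $R_{k+1}$ at $m_{k+1}$ and along $[[T^{(k)}]]$ at $m^{(k)}$ then falls under the concurrency theorem, whose output is a unique-up-to-iso admissible match $\mu_{k+1}\in\MatchGT{\bT}{R_{k+1}}{[[T^{(k)}]]}=\tMatchGT{T(R_{k+1})}{T^{(k)}}{\bT}$ (by Theorem~\ref{thm:TmainA}(i)) together with a match $m^{(k+1)}\in\MatchGT{\bT}{\compGT{\bT}{R_{k+1}}{\mu_{k+1}}{[[T^{(k)}]]}}{X_0}$. Defining $T^{(k+1)}:=\TcompGT{T(R_{k+1})}{\mu_{k+1}}{T^{(k)}}{\bT}$ and invoking Theorem~\ref{thm:TmainA}(ii) extends the hypothesis to step $k+1$. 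Setting $T:=T^{(n)}$ and $m:=m^{(n)}$ concludes existence, and uniqueness of $(T,m)$ up to isomorphism follows by iterating the uniqueness clause of the concurrency theorem.

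I expect the \textbf{main obstacle} to lie in unpacking the diagrammatic equivalence indicated by $\leftrightsquigarrow$ in~\eqref{eq:tlCharAux}: one must verify that the pushouts and pushout complements iteratively produced by the concurrency theorem paste to reconstitute both the columns of the tracelet $T$ and the rows of the direct derivation $D$, and that the application condition $\ac{c}_{I_{n\cdots 1}}$ carried by $T$ (built recursively via $\Shift$ and $\Trans$ as in Definition~\ref{def:genT}) is compatible with the conditions applicable at each step of $D$. This ultimately reduces to standard pasting lemmas for pushouts in adhesive categories and to the compatibility of $\Shift$ and $\Trans$ with pushouts as established in the compositional-rewriting references cited in the appendix.
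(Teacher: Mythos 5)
Your argument is correct, and its converse half coincides with the paper's proof: both construct the tracelet incrementally from the left by applying the concurrency theorem (Theorem~\ref{thm:concur}) to the two-step trace $X_{k+1}\Leftarrow X_k\Leftarrow X_0$ along $R_{k+1}$ and the composite accumulated so far, with uniqueness up to isomorphism inherited from the bijection of the concurrency theorem. The forward half is organized differently. You recurse through the tracelet's defining decomposition $T=\TcompGT{T_n}{\mu_n}{T_{(n-1)\cdots 1}}{\bT}$, splitting off the last rule with its actual overlap $\mu_n$ via Theorem~\ref{thm:TmainA}(ii) and the concurrency theorem and then invoking the inductive hypothesis. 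The paper instead invokes Corollary~\ref{cor:concur}, which rests on the observation that composing the spans $(Y^{(n)}_{j+1,j}\Leftarrow Y^{(n)}_{j,j-1})$ forming the bottom row of the tracelet is itself a rule composition along an \emph{identity-span} overlap; a single application of $[[T]]$ to $X_0$ therefore unfolds in one stroke into the $n$-step derivation along these row-spans, and the trace along the original rules $R_j$ is recovered by vertically pasting columns (Lemma~\ref{lem:sqCompAux}). Both routes are sound and reduce to the same two ingredients, concurrency plus vertical pasting; the paper's route has the minor advantage of directly producing the lower row of the left-hand diagram in~\eqref{eq:tlCharAux} exactly as drawn, whereas in your recursion that identification is deferred to the pasting step you yourself flag as the main obstacle. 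Your remark on the compatibility of the condition $\ac{c}_{I_{n\cdots 1}}$ with the stepwise conditions needs no separate argument: it is already packaged into the notion of admissible match by Definition~\ref{def:RwCs} and Corollary~\ref{cor:concur}.
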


\section{Tracelet Analysis}
\label{sec:ta}

Let us first introduce a convenient shorthand notation for tracelets, which emphasizes that by definition, every tracelet is a type of commutative diagram of consecutive direct derivations ``glued'' at common interface objects (compare Figure~\ref{fig:traceletoverviewA}):

\begin{definition}
	For a tracelet $T\in \cT^{\bT}_n$ of length $n\geq 1$, let symbols $t_j$ for $1\leq j\leq n$ denote \emph{$j$-th subtracelets} of $T$, so that $T\equiv t_n\vert t_{n-1}\vert\dotsc\vert t_1$ is a concatenation of its subtracelets, with
\begin{equation}
	t_j:=\inputtikz{subtracelet}\,,\qquad 
	Y^{(n)}_{n+1,n}:=O_{n\cdots 1}\,,\; Y^{(n)}_{1,0}:=I_{n\cdots 1}\,.
\end{equation}
\end{definition}
Based upon the concurrency theorem for compositional rewriting theories (Theorem~\ref{thm:concur}), one may define an operation that provides the starting point of our tracelet analysis framework:
\begin{corollary}[Tracelet surgery]\label{cor:TAbase}
	Let $T\in \cT^{\bT}_n$ a $\bT$-type tracelet of length $n$, so that $T\equiv t_n|\dotsc|t_1$. Then for any consecutive subtracelets $t_j|t_{j-1}$ in $T$, one may uniquely (up to isomorphisms) construct a diagram $t_{(j\vert j-1)}$ and a tracelet $T_{(j\vert j-1)}$ of length $2$ as follows:
\begin{subequations}
\begin{align}
&\inputtikz{tSubjAndJm1}
	\;\rightsquigarrow\; 
\inputtikz{tSubjAndJm1B}\label{eq:thmTA1}\\
&t_{(j\vert j-1)}:= 
\inputtikz{tlSubThmTA2}\,,\quad
T_{(j\vert j-1)}:=\TcompGT{T(r_j,\ac{c}_{I_j})}{{\color{h1color}\mu}}{T(r_{j-1},\ac{c}_{I_{j-1}})}{\bT}\label{eq:thmTA2}
\end{align}
\end{subequations}
Here, ${\color{h1color}\mu}=(I_j\hookleftarrow M\hookrightarrow O_{j-1})$ is the span of monomorphisms obtained by taking the pullback of the cospan $(I_j\hookrightarrow Y^{(n)}_{j,j-1}\hookleftarrow O_{j-1})$, and this ${\color{h1color}\mu}$ is always a $\bT$-admissible match. By associativity of the tracelet composition, this extends to consecutive sequences $t_j\vert\dotsc\vert t_{j-k}$ of subtracelets in $T$ inducing diagrams $t_{(j\vert\dotsc\vert j-k)}$ and tracelets of length $1$ $T_{(j\vert \dotsc\vert j-k)}$, where for $k=0$, $t_{(j)}=t_j$ and $T_{(j)}=T(r_j,\ac{c}_{I_j})$.
\end{corollary}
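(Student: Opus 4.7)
The strategy is to recognize that consecutive subtracelets $t_j\vert t_{j-1}$ inside $T$ already constitute a two-step $\bT$-direct derivation along the rules $(r_{j-1},\ac{c}_{I_{j-1}})$ and $(r_j,\ac{c}_{I_j})$, so the concurrency theorem (Theorem~\ref{thm:concur}) for $\bT$-type rewriting guarantees the existence of a $\bT$-admissible match presenting this two-step derivation as a single derivation along a composite rule. The work then reduces to showing that this admissible match coincides (up to isomorphism) with the pullback span $\mu$, which in turn is a direct consequence of the van Kampen property of pushouts along monomorphisms in an adhesive category.

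Concretely, I would first form $\mu=(I_j\hookleftarrow M\hookrightarrow O_{j-1})$ as the pullback of the cospan $(I_j\hookrightarrow Y^{(n)}_{j,j-1}\hookleftarrow O_{j-1})$; stability of monomorphisms under pullback in $\bfC$ ensures both legs of $\mu$ are monomorphisms. The crucial observation is that in an adhesive category a pushout along monomorphisms is also a pullback, so the original cospan at $Y^{(n)}_{j,j-1}$ is recovered as the pushout of $\mu$. Consequently the pushout step required by Definition~\ref{def:genT} to form $\TcompGT{T(r_j,\ac{c}_{I_j})}{\mu}{T(r_{j-1},\ac{c}_{I_{j-1}})}{\bT}$ reconstructs precisely the object $Y^{(n)}_{j,j-1}$ sitting inside $T$, and the subsequent $\bT$- and $DPO^{\dag}$-type derivation steps then reproduce (up to unique isomorphism) exactly the subdiagram $t_j\vert t_{j-1}$ of $T$. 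All requisite pushout complements exist because they already exist inside $T$, and the composite condition of~\eqref{eq:TcompDefB} cannot evaluate to $\ac{false}$ because the rules had already been successfully composed during the inductive construction of $T$. Thus $\mu\in\tMatchGT{T(r_j,\ac{c}_{I_j})}{T(r_{j-1},\ac{c}_{I_{j-1}})}{\bT}$, and $T_{(j\vert j-1)}$ is well-defined; uniqueness up to isomorphism is inherited from the universal properties of pullbacks, pushouts, and pushout complements, while Theorem~\ref{thm:TmainA}(i)--(ii) delivers the consistency with rule-level compositions under the evaluation map $[[.]]$.

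For the extension to longer consecutive blocks $t_j\vert\dotsc\vert t_{j-k}$, I would induct on $k$: the bases $k=0,1$ are trivial respectively the case just handled, and for the inductive step I would apply the pairwise argument to adjacent subtracelets and then invoke the associativity of tracelet composition (Theorem~\ref{thm:TmainA}(iii)) to guarantee that the resulting tracelet is independent, up to canonical isomorphism, of the order in which the pairwise compositions are carried out. The main obstacle will be the careful bookkeeping of the composite condition computation in~\eqref{eq:TcompDefB} against the conditions that $T$ already carries on each interface: because the $\mathsf{Shift}$ and $\mathsf{Trans}$ operators are themselves defined through pushouts and pullbacks in $\bfC$ (cf.\ Appendix~\ref{app:CR}), van-Kampen-type compatibility of these squares with the ambient adhesive structure gives the desired agreement, but this step is the only one that is not a purely formal consequence of the concurrency and associativity theorems.
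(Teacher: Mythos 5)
Your high-level strategy --- reduce everything to the concurrency theorem --- is the right one, and it is essentially what the paper does: its proof is a one-line invocation of Theorem~\ref{thm:tlChar}, whose converse direction is exactly the concurrency-theorem argument you sketch. However, the central step of your second paragraph contains a genuine error. From the fact that pushouts along monomorphisms are pullbacks you infer the converse, namely that the pushout of the pullback span $\mu=(I_j\hookleftarrow M\hookrightarrow O_{j-1})$ recovers $Y^{(n)}_{j,j-1}$; this implication does not hold. By effective unions (Theorem~\ref{thm:propsC}), the pushout of $\mu$ is the \emph{union} of $I_j$ and $O_{j-1}$ inside $Y^{(n)}_{j,j-1}$, which is in general a \emph{proper} subobject: $Y^{(n)}_{j,j-1}$ also carries context contributed by the other $n-2$ rules of $T$, touched by neither $r_j$ nor $r_{j-1}$. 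Consequently the composition $\TcompGT{T(r_j,\ac{c}_{I_j})}{\mu}{T(r_{j-1},\ac{c}_{I_{j-1}})}{\bT}$ does \emph{not} ``reproduce exactly the subdiagram $t_j\vert t_{j-1}$''; it produces the \emph{minimal} length-$2$ tracelet $T_{(j\vert j-1)}$, and the whole point of the corollary is that $T_{(j\vert j-1)}$ and the ambient diagram $t_{(j\vert j-1)}$ are distinct objects related by Theorem~\ref{thm:tlChar}, the latter being the application of the former to $Y^{(n)}_{j-1,j-2}$ along the induced match. For the same reason, ``all requisite pushout complements exist because they already exist inside $T$'' is not immediate (the POCs needed for the composition are taken over $\pO{\mu}$, not over $Y^{(n)}_{j,j-1}$), and the claim that the composite condition cannot be $\ac{false}$ ``because the rules had already been successfully composed during the inductive construction of $T$'' is unfounded: $T$ is built by left-extension, so the pair $(r_j,r_{j-1})$ was never directly composed in that construction.

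The repair is the paper's route: regard $t_j\vert t_{j-1}$ as a two-step $\bT$-derivation starting at the object $Y^{(n)}_{j-1,j-2}$ and apply the converse direction of Theorem~\ref{thm:tlChar} (equivalently, Theorem~\ref{thm:concur} together with Corollary~\ref{cor:concur}). Its constructive bijection simultaneously delivers the admissibility of the pullback span $\mu$ --- including non-falsity of the composite condition, which is part of what a $\bT$-admissible match of rules with conditions means --- the tracelet $T_{(j\vert j-1)}$, and uniqueness up to isomorphism. Your extension to longer blocks by induction on $k$ together with Theorem~\ref{thm:TmainA}(iii) is fine as stated; note that the associativity theorem already absorbs the condition bookkeeping there, so no separate van-Kampen-type argument for $\Shift$ and $\Trans$ is required.
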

\begin{proof}
The proof follows by invoking Theorem~\ref{thm:tlChar} in order to convert the derivation trace encoded in a given the subdiagram $t_j\vert\dotsc\vert t_{j-k}$ into the application of a tracelet $T_{(j\vert \dotsc\vert j-k)}$ of length $k$ onto the initial object $Y^{(n)}_{j-k,j-k-1}$.
\end{proof}

It is via these tracelet surgery operations that we may lift the theory of static analysis of derivation traces to our alternative setting of tracelets. We define in the following two notions of equivalence that have analogues also in the traditional theory of rewriting systems~\cite{Kreowski1987,Corradini1994,ehrig1999handbook,ehrig2006fund}.

\begin{definition}[Tracelet abstraction equivalence]\label{def:tAeq}
	Two tracelets $T,T'\in \cT^{\bT}_n$ of the same length $n\geq 1$ are defined to be \emph{abstraction equivalent}, denoted $T\equivAbs T'$, if there exist suitable isomorphisms on the objects in $T$ in order to transform $T$ into $T'$ (with transformations on morphisms induced by object isomorphisms).
\end{definition}
Due to the intrinsic invariance of all category-theoretical constructions pertaining to rewriting rules as well as tracelets up to universal isomorphisms, it is clear that abstraction equivalence is a very natural\footnote{While we will typically consider tracelets by default only up to abstraction equivalence, the definitions provided thus far may nevertheless still be interpreted as concrete operations if suitable ``standard representatives'' are chosen for pushouts, pullbacks etc. --- for instance, an extensive discussion of such an interplay of concrete representatives vs.\ universal structures for the special case of graph rewriting systems may be found in~\cite{baldan2000modelling}.}, or even essential type of equivalence.


\begin{figure}\gdef\gsScale{0.8}
  \begin{subfigure}{0.5\textwidth}
    \centering
    $\begin{array}{c}
      \includegraphics[scale=0.5]{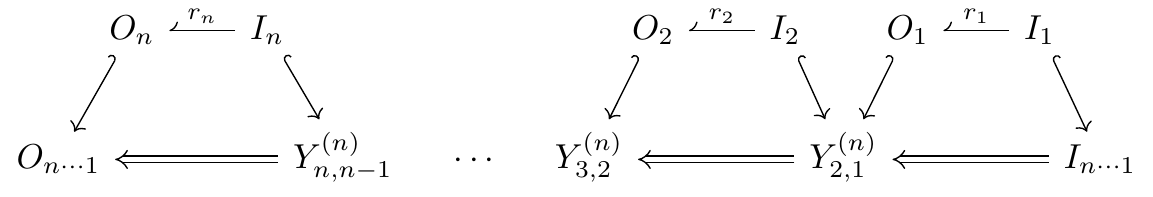}\\
      \widehat{=}\\
      \includegraphics[scale=0.5]{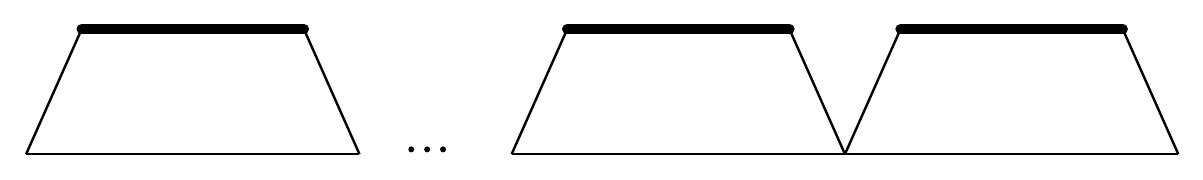}
    \end{array}$
    \caption{\label{fig:traceletoverviewA}Tracelets as (minimal) derivation traces.}
  \end{subfigure}\hfill
  \begin{subfigure}{0.5\textwidth}
    \centering
    \includegraphics[scale=0.3]{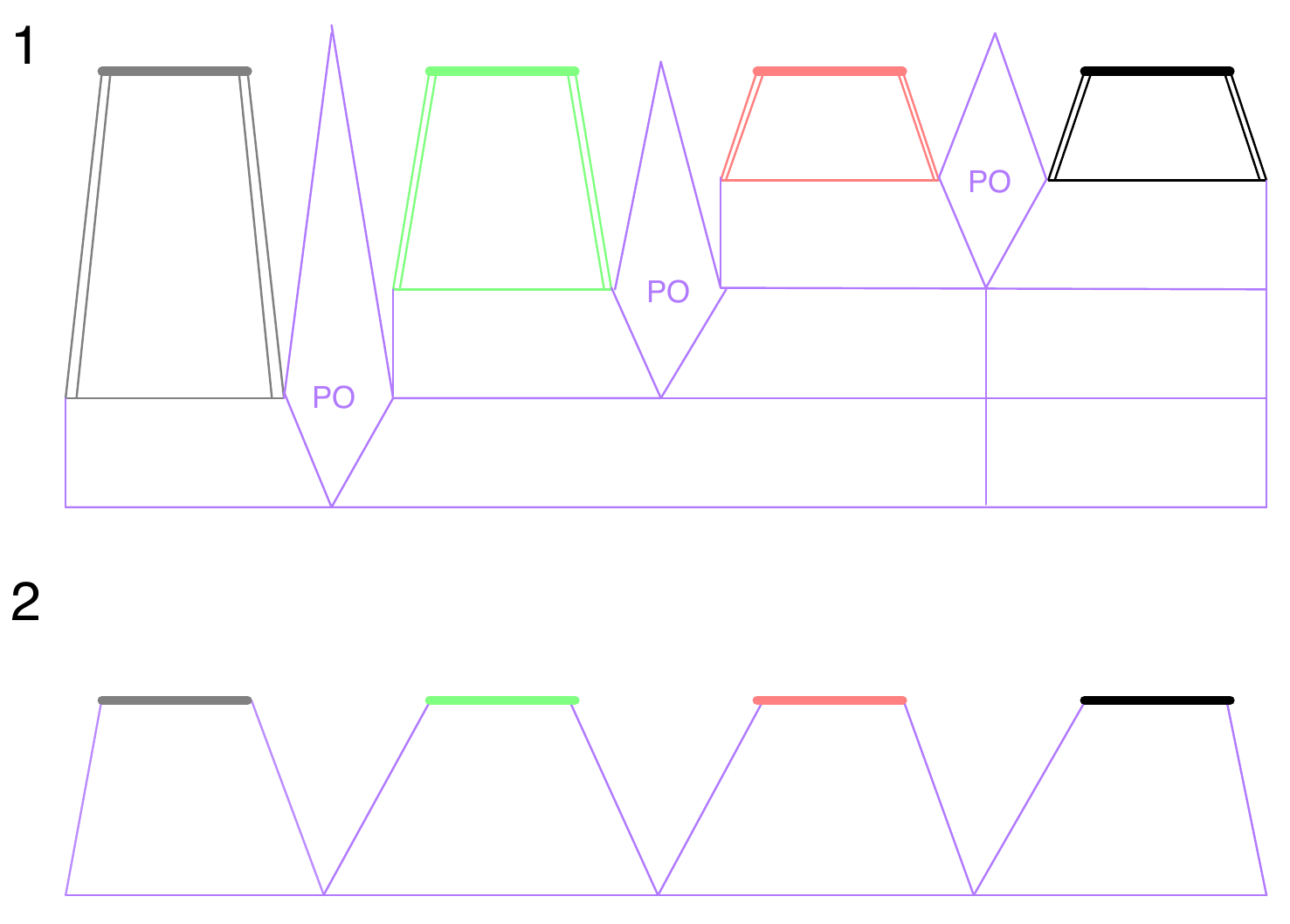}
    \caption{\label{fig:traceletoverviewC}Tracelet generation (Definition~\ref{def:genT}).}
  \end{subfigure}

  \begin{subfigure}{0.5\textwidth}
    \centering
    \includegraphics[scale=0.3]{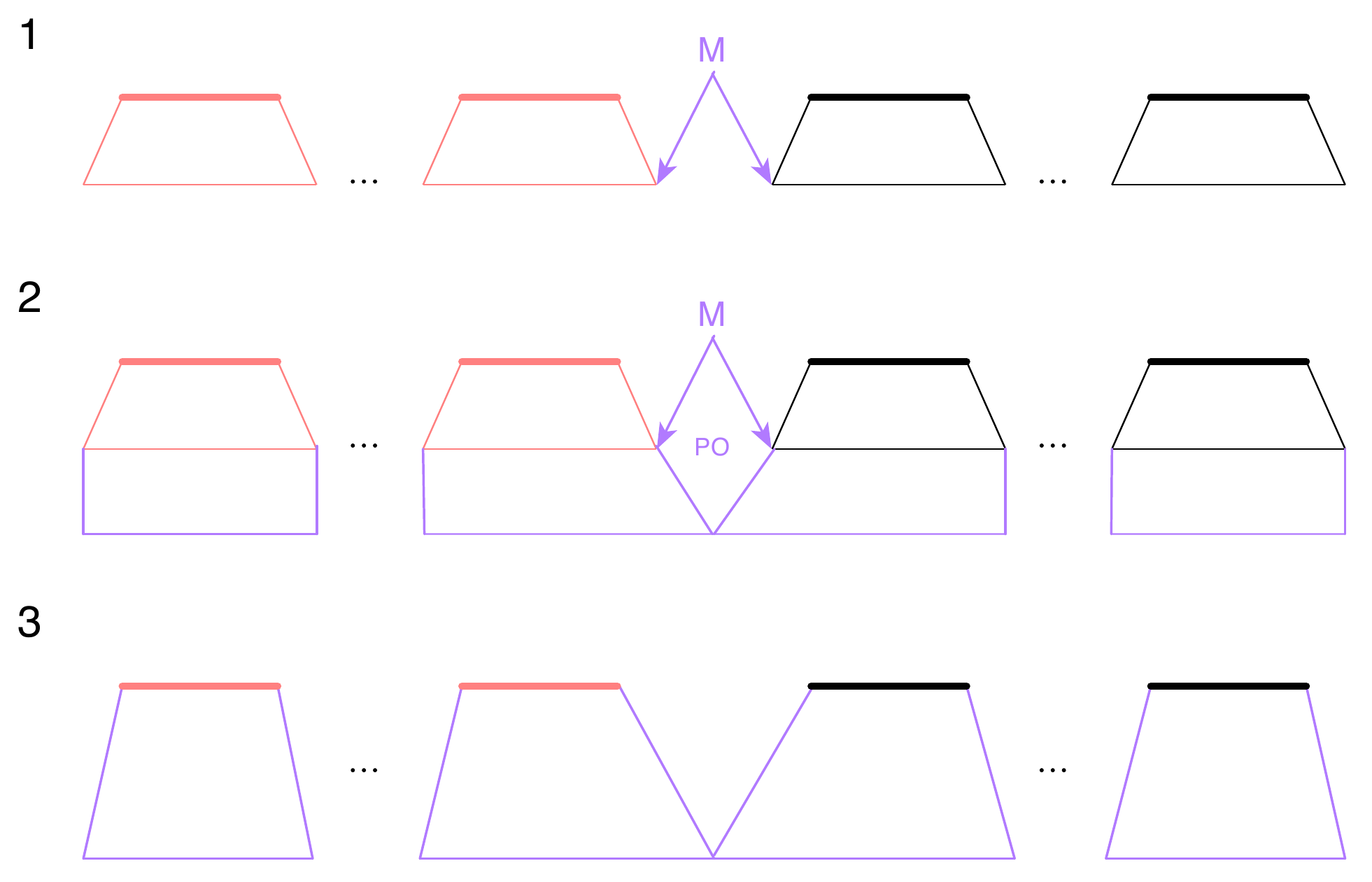}
    \caption{\label{fig:traceletoverviewB}Tracelet composition (Definition~\ref{def:tlComp}).}
  \end{subfigure}\hfill
  \begin{subfigure}{0.5\textwidth}
    \centering
    \includegraphics[scale=0.3]{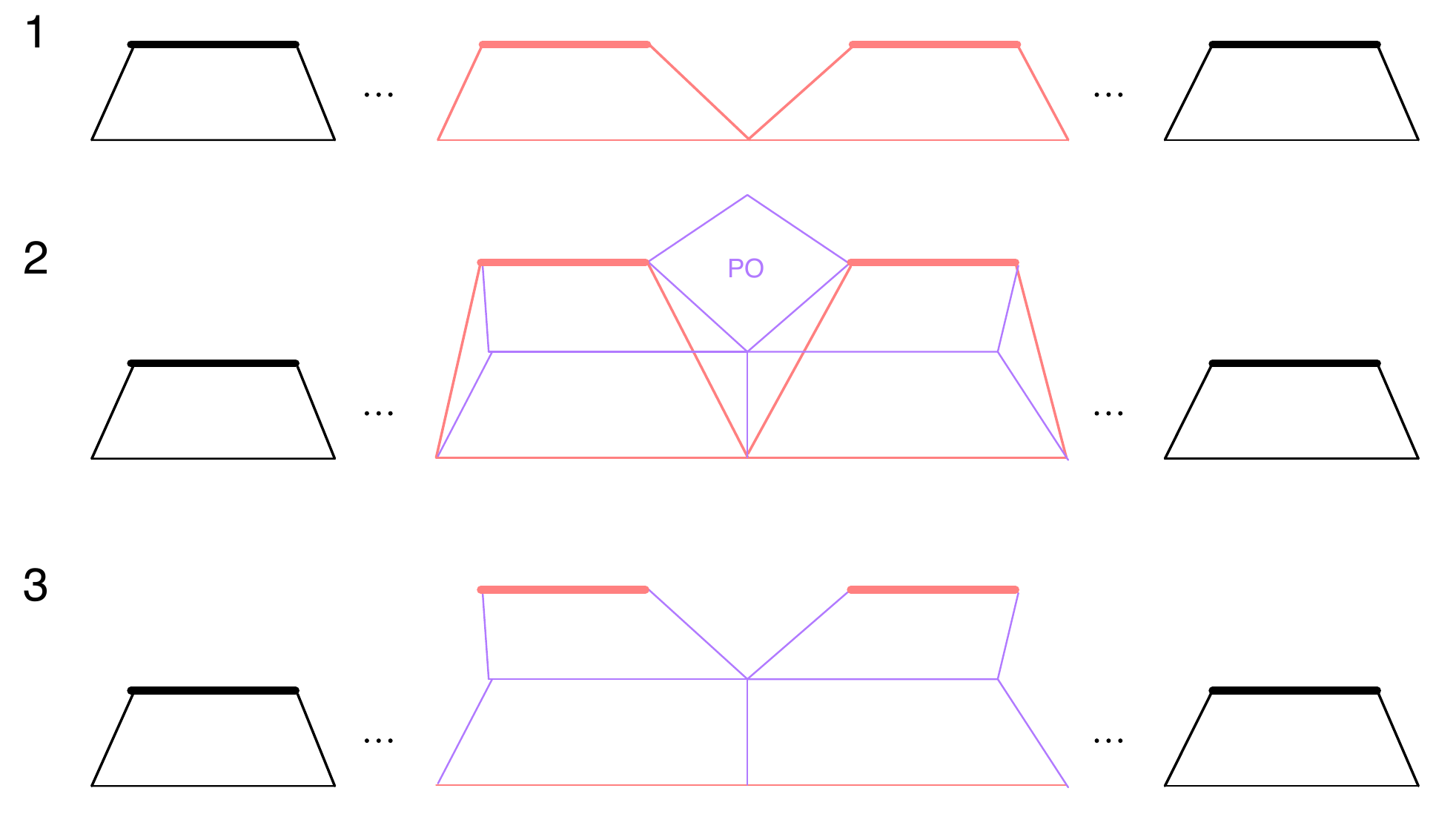}
    \caption{\label{fig:traceletoverviewD}Tracelet analysis (Section~\ref{sec:ta}).}
  \end{subfigure}
  \caption{Schematic overview of the tracelet and tracelet analysis framework.}\label{fig:traceletoverview}
\end{figure}


\begin{definition}[Tracelet shift equivalence]\label{def:tSeq}
	Let $T,T'\in\cT^{\bT}_n$ be two tracelets of the same length $n\geq 1$. If there exist subtracelets $t_j\vert \dotsc\vert t_{j-k}$ and $t'_j\vert \dotsc\vert t'_{j-k}$ such that
\begin{romanenumerate}
		\item the subtracelets have the same rule content (up to isomorphisms), i.e.\ there exists a permutation $\sigma\in S_{k+1}$ such that $[[T_{(p)}]]\cong [[T'_{(\sigma(p))}]]$ for all $j-k\leq p\leq j$, and
		\item the diagrams $t_1\vert\dotsc\vert t_{(j\vert\dotsc \vert j-k)}\vert \dotsc\vert t_n$ and $t_1'\vert\dotsc\vert t'_{(j\vert\dotsc \vert j-k)}\vert \dotsc\vert t_n'$ are isomorphic,
\end{romanenumerate}
\noindent then $T$ and $T'$ are defined to be \emph{shift equivalent}, denoted $T\equivShift T'$. Extending $\equivShift$ by transitivity then yields an equivalence relation on $\cT^{\bT}_n$ for every $n\geq 1$.
\end{definition}
Referring to Appendix~\ref{app:tlShiftEqExample} for the precise details, one may for example verify that the tracelet $t_3\vert t_2\vert t_1$ of length $3$ depicted in Fig.~\ref{fig:traceletAssociativityB} is shift equivalent to a tracelet $t_2'\vert t_3'\vert t_1$,  with the order of the applications of the rules $r_3$ and $r_2$ (contained in the yellow box in Fig.~\ref{fig:traceletAssociativityB}) reversed.  Notably, while our definition of tracelet abstraction equivalence follows precisely the same methodology as its analogous notion in rewriting theory, our definition of tracelet shift equivalence is strictly more general than the notion of shift equivalence in rewriting theories according to the standard literature~\cite{Kreowski1987,ehrig1999handbook,ehrig2006fund}. More precisely, the latter concept is based upon so-called \emph{sequential independence} for derivation sequences~~\cite{ehrig1999handbook,ehrig2014mathcal,Corradini2006,BALDAN2014}, which would induce a notion of tracelet shift equivalence strictly less permissive than our requirements described in Definition~\ref{def:tSeq}. As this difference is of crucial importance to the design of static analysis algorithms, we provide the precise technical relationship in Theorem~\ref{thm:CCC} of Appendix~\ref{app:CSI} for clarification.

\section{Application: a prototype for a Feature-driven Explanatory Tracelet Analysis (FETA) algorithm}\label{sec:FETA}

A major motivation behind the development of the tracelet analysis framework has been the desire to improve upon (and, to an extent, also formalize) existing static analysis techniques for rewriting systems in the application areas of bio- and organic chemical reaction systems (see also Section~\ref{sec:intro}). An application of our framework to the static generation of so-called \emph{pathways} appears to be particularly promising:
\begin{definition}[Pathways (sketch)]\label{def:pathways}
	Let $\cR=\{ R_j\in \LinAc{\bfC}\}_{j\in J}$ be a (finite) set of rules with conditions over $\bfC$, which model the \emph{transitions} of a rewriting system. We designate a rule $E\in \LinAc{\bfC}$ as modeling a \emph{``target event''}, i.e.\ $E$ must be the last rule applied in the derivation traces we will study. Let moreover $\equiv_C$ be an equivalence relation on derivation traces such as abstraction or shift equivalence, or combinations thereof. Then the task of \emph{pathway generation} or \emph{explanatory synthesis} for the type-$\bT$ rewriting system based upon the set of rules $\cR$ is defined as follows: synthesize the \emph{maximally compressed} derivation traces ending in an application of $E$ such that ``$E$ cannot occur at an earlier position in a given trace''. Here, compression refers to retaining only the smallest traces in a given ($\equiv_C$)-equivalence class, while the last part of the statement needs to be made precise in a specific application (as it depends on the chosen framework).
\end{definition}

A standard approach to this type of task consists in generating a large number of random generic derivation traces first, followed by static analysis type operations performed on these traces in order to extract pathways (see e.g.\ the recent review~\cite{Boutillier2018aa}). This type of approach typically suffers from two disadvantages: (i) depending on the complexity of the rule set $\cR$ and of the target event $E$, it may be difficult to find suitable choices of initial objects $X_0$ as an input to the simulation algorithms, and (ii) the extraction of compressed pathways from typically quite extensive datasets of simulator outputs may be computationally rather intense. We thus propose an alternative pathway generation approach based upon tracelets, which avoids the first problem by design (since tracelets are composable with themselves and yield the minimal derivation traces for entire classes of derivations according to Theorem~\ref{thm:tlChar}).

\begin{definition}[Algorithm~\ref{algo:FETA}: FETA]
With input data as described in Algorithm~\ref{algo:FETA}, let $\equiv_C$ be the equivalence relation obtained by conjunction of the tracelet abstraction and tracelet shift equivalences $\equivAbs$ and $\equivShift$, respectively. Then for a tracelet $T\in \cT^{\bT}_{n+1}$ of the structure $T=t_E\vert t_n\vert\dotsc \vert t_1$ (for some finite value $n\geq 1$, and with $t_E$ containing the rule $E$, $[[T_{(E)}]]\cong E$), we let $E\prec_C T$ denote the following property: there exist no tracelets $T'\in \cT^{\bT}_{n+1}$
\begin{equation}
		t_E \vert t_n\vert \dotsc \vert t_1 \equiv_C
		t_{n+1}' \vert t_n'\vert \dotsc \vert t_1' 
    \quad \text{with }\;[[T'_{(k)}]]\cong E\;\text{for  an index $k<n+1$}\,.
\end{equation}
We refer to the set of such tracelets modulo $\equiv_C$ as the set of \emph{strongly compressed pathways}.
\end{definition}

\begin{algorithm}
    \SetAlgoLined
    \KwData{$N_{max}\geq 2$ $\gets$ maximal length of tracelets to be generated\\
    $T_E:=T(E)$ $\gets$ tracelet of length $1$ associated to the rule $E$\\
    $\mathsf{T}_1:=\{T(R_j)\mid j\in J\}$ $\gets$ set of tracelets of length $1$ associated to the transitions}
    \KwResult{sets $\mathsf{P}_i$ ($i=2,\dotsc,N_{max}$) of strongly compressed pathways}
    \Begin{
    $P_1:=\{T_E\}$ $\gets$ the only pathway of length $1$;\\
    \For{$2< n\leq N_{max}$}{
	   $\mathsf{pre}_n:=
    \left.\left\{\TcompGT{P}{\mu}{T}{\bT}\right\vert 
    	P\in \mathsf{P}_{n-1}, T\in \mathsf{T}_1\,,\; 
    	\mu\in \tMatchGT{P}{T}{\bT}\right \}$\;
    $\mathsf{P}_n:=\{T'\in \mathsf{pre}_n\vert E\prec_C T'\}\diagup_{\equiv_C}$\;
    }
    }
\caption{Feature-driven Explanatory Tracelet Analysis (FETA)\label{algo:FETA}}
\end{algorithm}

Since length limitations preclude presenting an application example of realistic complexity in one of the chemical reactions system frameworks, we will present here only a first proof of concept for an application of the FETA algorithm, which nevertheless illustrates in which sense the above algorithm synthesizes ``explanations''.

\begin{example}\label{ex:FETA}
	Let $\bfC=\mathbf{FinGraph}$ be the category of finite directed multigraphs. For compactness of graphical illustrations and to enhance intuitions, we will present linear rules $r=(O\hookleftarrow K\hookrightarrow I)\in \Lin{\mathbf{FinGraph}}$ in a diagrammatic form, where graphs $O$ and $I$ are depicted to the left and to the right, respectively, and where dotted lines connecting elements of $I$ with elements of $O$ indicate the structure of the partial map encoded in the span $r$. Let thus $\cR=\{r\}$ be a one-element transition set (for a rule $r\in \Lin{\mathbf{FinGraph}}$ without conditions), and let $e_1,e_2\in \Lin{\mathbf{FinGraph}}$ be two rules modeling alternative target events: 
	\begin{equation}
		r=\vcenter{\hbox{\rotatebox{90}{
		\inputtikzNoVcenter{ex1DA}
		}}}\,,\quad
		e_1=\vcenter{\hbox{\rotatebox{90}{
		\inputtikzNoVcenter{ex1DB}}}}\,,\quad
		e_2=\vcenter{\hbox{\rotatebox{90}{
		\inputtikzNoVcenter{ex1DC}}}}\,.
	\end{equation}
If we consider DPO-type rewriting, the FETA algorithm produces the following strongly compressed pathways for $n\geq 2$ (with {\color{h3color}light blue} arrows indicating the relative overlap structure within the tracelets):
\begin{equation}
	\mathsf{P}_n=\{S_n\}\,,\qquad 
	S_n=t_E\vert \!\!\underbrace{t_r\vert \dotsc\vert t_r}_{\text{$(n-1)$ times}}=
		\vcenter{\hbox{\rotatebox{90}{
		\inputtikzNoVcenter{ex1DD}}}}\,,
	\end{equation}
	while for the target event $e_2$ the algorithm detects no pathways $\mathsf{P}'_n$ for $n\geq 2$. This result may indeed be interpreted as expressing a high-level causal structure or explanation about this simple rewriting system. As for $e_1$, the pathways $\mathsf{P}_n$ are seen to effectively encode those possibilities of sequential rule compositions that ensure that the edge eventually matched by $e_1$ had not already been present in any of the first $n-2$ steps of rule applications. This leaves only the pathways of type $S_n$ as options, since for any other match of the tracelet $T_E$ within a candidate tracelet $T$ of length $n$, one finds a violation of the condition $E\prec_C T$. On the other hand, the fact that there are no pathways of length $n\geq 2$ for the target event encoded by $e_2$ signifies that the rule $r$ acting on some initial graph $X_0$ can in fact not generate any occurrences of the shape of the input of $e_2$ (two edges with a shared vertex pointing towards each other) that had not already been present in $X_0$. Note that we have obtained this result \emph{statically}, and without ever evaluating any concrete direct derivation on initial graphs $X_0$.
\end{example}

\section{Conclusion and Outlook}\label{sec:conclusion}

Many of the standard constructions in the concurrency theory and the theory of static analysis of rewriting systems over adhesive categories are ultimately based upon one of the central theorems of rewriting theory, which is known fittingly as the \emph{concurrency theorem}~\cite{DBLPconf/gg/1997handbook,ehrig2006fund,ehrig2014mathcal}. The essential property provided by this theorem is a form of compatibility between (i) sequential applications of rewriting rules starting at some initial object $X_0$, and (ii) a one-step application of a \emph{composition} of the rewriting rules involved, and with both descriptions in a (constructive) bijective correspondence. As outlined in Section~\ref{sec:intro}, it is then precisely this correspondence which allows to develop various abstractions and analysis techniques for derivation traces of a given rewriting system~\cite{Baldan1998aa,Baldan2006aa,baldan2000modelling,Baldan2007aa,danos2012graphs,BALDAN2014}. However, as has been only very recently discovered~\cite{bdg2016,bp2018,nbSqPO2019,bk2019a}, both Double-Pushout (DPO) and Sesqui-Pushout (SqPO) rewriting theories over suitable adhesive categories carry an additional important structure, namely on the operation of \emph{composing rules} itself: in a certain sense, rule compositions are \emph{associative} (with a concrete example provided in Figure~\ref{fig:traceletAssociativity}). 

In this paper, we demonstrate that combining the concurrency with the associativity theorems, one is naturally led to the concept of \emph{tracelets} (Section~\ref{sec:tSetup}), which may be intuitively understood as a form of \emph{minimal derivation traces} that generate all derivations that are based upon the same sequential rule compositions (Theorem~\ref{thm:tlChar}). Owing to the associativity theorem for rule compositions, tracelets are on the one hand by definition instances of derivation traces themselves and thus admit all aforementioned standard static analysis techniques, but importantly in addition afford certain universal properties: an associative notion of composition directly on tracelets, certain types of ``surgery'' operations, and finally various forms of equivalence relations that may be employed to develop compressions and other abstractions of tracelets (Section~\ref{sec:ta}). 

In view of practical applications, we have proposed a first prototypical tracelet-based static analysis algorithm, the so-called \emph{Feature-driven Explanatory Tracelet Analysis (FETA)} algorithm (Section~\ref{sec:FETA}). As illustrated in Example~\ref{ex:FETA}, this algorithm permits to extract high-level causal information on the ``pathways'' or minimal derivation traces that can lead to the ultimate application of the rule that models a target event. We believe that our methodology may provide a significant contribution to the static analysis toolset in the fields of chemical graph transformation systems~\cite{Benk2003,banzhafetalDR20154968,Andersen2016,andersen2018rule} and of rule-based modeling approaches to biochemical reaction systems such as the \textsc{Kappa}~\cite{Danos2003ab,Danosaa,Danosab,Boutillier2018aa} and the \textsc{BioNetGen}~\cite{Harris2016aa,Blinov2004aa} frameworks. 

 \bibliographystyle{eptcs}
 \bibliography{refs-NB-ACT-2019}

\newpage
\appendix

\section{Background material: compositional rewriting theories}
\label{app:CR}

For the readers' convenience, we collect in this section a number of technical results and details on rewriting theories, most of which is either standard material from the rewriting theory, or quoted from our recent series of works~\cite{bp2018,nbSqPO2019,bk2019a}.

\subsection{Properties of adhesive categories}\label{app:ACprops}

\begin{theorem}[\cite{bk2019a}]\label{thm:propsC}
	Let $\bfC$ be a category satisfying Assumption~\ref{as:main}. Then the following properties hold:
	\begin{romanenumerate}
		\item $\bfC$ has \emph{effective unions} (compare~\cite{lack2005adhesive}): given a commutative diagram as in the left of Fig.~\ref{fig:cdA}, if the $(b',c')$ is the pullback of the cospan of monomorphisms $(b,c)$ (which by stability of monos under pullback entails that $b',c'\in \mono{\bfC}$), and if $(e,f)$ is the pushout of the span $(b',c')$ (with $e,f\in \mono{\bfC}$ by stability of monos under pushout), then the morphism $d$ which exists by the universal property of pushouts is also a monomorphism. 
		\item Properties of \emph{final pullback complements (FPCs)}\footnote{Recall e.g.\ from~\cite{Corradini2006} that for a pair of composable morphisms $(c,a)$ such as in the middle part of Fig.~\ref{fig:cdA}, a pair of composable morphisms $(d,b)$ is an FPC of $(c,a)$ if $(a,b)$ is the pullback of $(c,d)$, and if the following universal property holds: given a cospan $(c,z)$ such that $(x,y)$ is the pullback of $(c,z)$ and such that there exists a morphism $w$ satisfying $z=a\circ w$, then there exists a unique (up to isomorphism) $w^{*}$ such that $z=d\circ w^{*}$.} in $\bfC$ (cf.\ the middle diagram in Fig.~\ref{fig:cdA}): for every pair of composable monomorphisms $(c,a)$, there exists an FPC $(d,b)$, and moreover $b,d\in \mono{\bfC}$. 
		\item \emph{Characterization of epimorphisms via pushouts}: given a diagram such as on the right of Fig.~\ref{fig:cdA}, where all morphisms except $e$ are monomorphisms, where the square $(1)$ is a pushout, $e\circ d_i=e_i$ for $i=1,2$, and where $(b_1,b_2)$ is the pullback of $(e_1,e_2)$. Then the morphism $e$ is an epimorphism if and only if the exterior square is a pushout.\label{thm:charEpi}
	\end{romanenumerate}
\end{theorem}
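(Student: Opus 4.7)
The plan is to treat each of the three properties separately, because each one leverages a different aspect of the adhesive and finitary structure of $\bfC$, and because the logical dependencies run only one way (part (iii) will actually use part (i)).

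For part (i) on effective unions, I would reproduce the classical Lack--Sobocinski argument. Since $\bfC$ is adhesive, the pushout of the span of monomorphisms $(b',c')$ is a Van Kampen square. Organising the given data into the cube whose top face is the outer commuting square (with comparison $d:P\to D$) and whose bottom face is the chosen pushout, the two back faces are pullbacks by the hypothesis that $(b',c')$ is the pullback of $(b,c)$. The Van Kampen property then forces the front faces to be pullbacks as well, and since the pullback of a monomorphism along anything is a monomorphism, this immediately entails that $d$ is monic.

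For part (ii), I would construct the final pullback complement $(d,b)$ of a composable pair of monos $(c,a)$ by combining adhesivity with the finitariness hypothesis and the strict initial object: the pair $(c,a)$ determines a subobject structure on the target of $c$, and among the (finitely many, by finitariness) candidate pullback complements one can single out the universal one by the usual subobject-lattice argument. Monicity of $b$ and $d$ then follows from stability of monomorphisms under pullback and pushout in adhesive categories.

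For part (iii) I would argue the two directions separately. The ``outer is a pushout $\Rightarrow e$ is epi'' direction follows directly from the uniqueness clause of the pushout universal property: any two $f,g:E\to X$ with $f\circ e=g\circ e$ induce the same cocone $(f\circ e_i)=(g\circ e_i)$ on the pullback span $(b_1,b_2)$, and hence must coincide by uniqueness of the mediating morphism. For the converse, given a cocone $(h_1,h_2)$ on $(b_1,b_2)$, I would use the inner pushout $(1)$ to first lift it to a morphism $h':D\to X$ and then invoke epicity of $e$ together with the effective-unions property from part (i) to descend $h'$ along $e$ into the required unique $h:E\to X$. The main obstacle is precisely this descent step: establishing that $h'$ coequalises whatever identifications $e$ enforces requires coordinating the inner pushout, the pullback $(b_1,b_2)$, and effective unions, and it is this interaction—rather than any calculation in the first two parts—that makes part (iii) substantially the most delicate.
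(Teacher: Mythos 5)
First, a point of reference: the paper does not prove Theorem~\ref{thm:propsC} at all --- it is imported verbatim from~\cite{bk2019a} (with part~(i) attributed to~\cite{lack2005adhesive}), so there is no in-paper proof to compare against; what follows is an assessment of your sketch on its own terms.

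Your part~(iii) forward direction is correct, and your part~(i) is aiming at the right result (the Lack--Soboci\'{n}ski effective-unions theorem), but the cube you describe cannot be formed: for the Van Kampen property you need the pushout $(b',c') \Rightarrow P$ as the \emph{bottom} face with the vertical morphisms pointing down into it, whereas your comparison morphism $d$ points \emph{out of} $P$ into $D$; with the square over $D$ on the bottom instead, VK does not apply because that square is not known to be a pushout. The actual argument shows the kernel pair of $d$ is trivial by pulling the VK pushout back along the two subobject inclusions --- recoverable, but not what you wrote. The two genuine gaps are elsewhere. Part~(ii) is essentially a non-proof: existence of final pullback complements for composable pairs of monomorphisms is a substantive theorem, not a consequence of ``the usual subobject-lattice argument,'' and your claim that monicity of $d$ follows from ``stability of monos under pushout'' is a category error --- an FPC is not a pushout. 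The known construction in this finitary setting realizes the FPC object as the join, in the (finite, by finitariness) subobject lattice of the ambient object, of all subobjects whose pullback lands inside the given one, and then verifies the finality property using effective unions; none of that machinery appears in your sketch. In part~(iii), the converse direction is exactly where the content lies and you explicitly leave it open. The missing chain is: form the genuine pushout $Q$ of the span $(b_1,b_2)$; by part~(i) the comparison $q:Q\rightarrow E$ is a monomorphism; the inner pushout supplies a factorization $e = q\circ r$, so $q$ is an epimorphism whenever $e$ is; and only then does the epi-mono-factorization axiom of Assumption~\ref{as:main} (which you never invoke) let you conclude that the epic mono $q$ is an isomorphism, i.e.\ that the exterior square is a pushout. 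Without that last axiom the descent step you flag as ``delicate'' simply does not go through, since adhesive categories need not be balanced.
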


\begin{figure}[h]
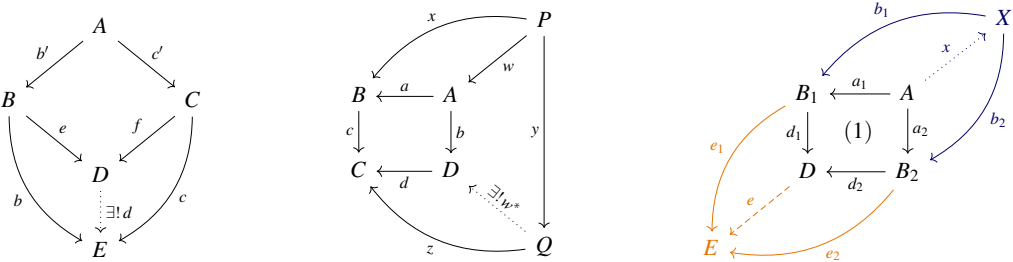

\centering
\inputtikzNoVcenter{effectiveUnions}
$\qquad\qquad$
\inputtikzNoVcenter{FPCillustration}
$\qquad\qquad$
\inputtikzNoVcenter{shiftConstructionIluustration}
\caption{\label{fig:cdA}Effective unions (left), final pullback complements (FPCs) and their universal property (middle), and the epimorphism-pushout correspondence (right)~\cite{bk2019a}.}
\end{figure}

\subsection{Conditions, shift and transport constructions}\label{sec:conds}

\begin{definition}[Conditions]
	Let $\bfC$ be a category satisfying Assumption~\ref{as:main}. Then a \emph{condition over an object $X\in \obj{\bfC}$}, denoted $\ac{c}_{X}$, is inductively defined as follows:
	\begin{itemize}
		\item $\ac{c}_X=\ac{true}$ is a condition over $X$.
		\item For every $(a:X\rightarrow A)\in \mono{\bfC}$ and for every condition $\ac{c}_A$ over $A\in \obj{\bfC}$, $\exists(a:X\rightarrow A,\ac{c}_A)$ is a condition over $X$.
		\item If $\ac{c}_X$ is a condition over $X$, so is its negation $\neg \ac{c}_X$.
		\item If $\ac{c}_X^{(i)}$ are conditions over $X$ (for indices $i\in I$), then $\land_{i\in I}\ac{c}_{X}^{(i)}$ is a condition over $X$.
	\end{itemize}
\end{definition}

A concrete interpretation of conditions is provided by the accompanying definition of \emph{satisfaction} of conditions.
\begin{definition}[Satisfaction]
Let $X\in \obj{\bfC}$ be an object and $\ac{c}_X$ a condition over $X$. Then the \emph{satisfaction of $\ac{c}_X$ by a monomorphism $(m:X\rightarrow Y)\in \mono{\bfC}$}, denoted $m\vDash \ac{c}_X$, is inductively defined as follows:
\begin{itemize}
		\item Every morphism satisfies $\ac{c}_X=\ac{true}$.
		\item For every $(a:X\rightarrow A)\in \mono{\bfC}$ and for every condition $\ac{c}_A$ over $A\in \obj{\bfC}$, the morphism $m:X\rightarrow Y$ satisfies $\exists(a:X\rightarrow A,\ac{c}_A)$ if there exists $(q:A\rightarrow Y)\in \mono{\bfC}$ such that $m=q\circ a$ and $q\vDash \ac{c}_A$.
		\item $m$ satisfies $\neg \ac{c}_X$ if it does not satisfy $\ac{c}_X$.
		\item If $\ac{c}_X^{(i)}$ are conditions over $X$ (with $i\in I$), $m$ satisfies $\land_{i\in I}\ac{c}_{X}^{(i)}$ if $m\vDash\ac{c}_X^{(i)}$ for all $i\in I$.
	\end{itemize}
\end{definition}

The notion of satisfaction of conditions permits to reason on equivalences of conditions:
\begin{definition}[Equivalence]
	Let $X\in \obj{\bfC}$ be an object, and let $\ac{c}_X^{(1)}$ and $\ac{c}_X^{(2)}$ be two conditions over $X$. Then the two conditions are \emph{equivalent}, denoted $\ac{c}_X^{(1)}\equiv \ac{c}_X^{(2)}$, iff
	\begin{equation}
		\forall (m:X\rightarrow Y)\in \mono{\bfC}:\; m\vDash \ac{c}_X^{(1)} \Leftrightarrow m\vDash \ac{c}_X^{(2)}\,.
	\end{equation}
\end{definition}

Besides the evident equivalences that arise from isomorphisms of rules and objects, some important classes of equivalences are implemented by the following two constructions quoted from~\cite{bk2019a}, which are essential in our compositional rewriting framework (cf.\ Section~\ref{sec:DPOandSqPOrewriting}).

\begin{definition}[Shift and Transport]
	Let $\bfC$ be a category satisfying Assumption~\ref{as:main}. Then for every condition $\ac{c}_A$ over $A\in \obj{\bfC}$ and for every $(a_1:A\rightarrow B_1)$, the \emph{shift of the condition $\ac{c}_A$ over $a_1$}, denoted $\Shift(a_1,\ac{c}_A)$, is defined inductively as follows:
	\begin{itemize}
		\item $\Shift(a_1,\ac{true}):=\ac{true}$ (over $B_1$).
		\item If $\ac{c}_A=\exists(a_2:A\rightarrow B_2,\ac{c}_{B_2})$ for some $(a_2:A\rightarrow B_2)\in \mono{\bfC}$, then
		\begin{equation}
			\Shift(a_1:A\rightarrow B_1,\exists(a_2,\ac{c}_{B_2})):=\bigvee_{(b_1,b_2)\in \cX}\exists(e_1:B_1\rightarrow E,\Shift(e_2:B_2\rightarrow E,\ac{c}_{B_2}))\,,
		\end{equation}
		where the set $\cX$ is the set of all isomorphism classes\footnote{Note that our improvement over the original variant of this construction as presented in~\cite{habel2009correctness,ehrig2014mathcal} consists in the precise characterization of the contributions to $\Shift(a_1,\exists(a_2,\ac{c}_{B_2}))$ via constructing pushouts of the possible spans $(b_1,b_2)$ (rather than via the original indirect characterization in terms of listing the possible epimorphisms $(e:D\rightarrow E)\in \epi{\bfC}$), a result which relies on Theorem~\ref{thm:propsC}\ref{thm:charEpi}, and which is of central importance to proving the compositionality and associativity of rules with conditions.} of spans $(b_1,b_2)$ as in the right part of Figure~\ref{fig:cdA}, where $(e_1,e_2)$ are constructed as the pushout of $(b_1,b_2)$.
		\item $\Shift(a_1,\neg\ac{c}_A):=\neg\Shift(a_1,\ac{c}_A)$  and $\Shift(a_1,\lor_{i\in I}\ac{c}_A^{(i)}):=\lor_{i\in I}\Shift(a_1,\ac{c}^{(i)}_A)$.
	\end{itemize}

	We also define the operations of \emph{transporting a condition $\ac{c}_O$ over the output object $O$ of a linear rule $r=(O\leftarrow K\rightarrow I)\in \Lin{\bfC}$ to the input object $I$ of $r$}.  The construction is denoted $\Trans(r,\ac{c}_O)$ and is defined inductively as follows:
	\begin{itemize}
		\item $\Trans(r,\ac{true}):=\ac{true}$ (over $I$).
		\item If $\ac{c}_O=\exists(b:O\rightarrow B,\ac{c}_B)$ for some $b\in \mono{\bfC}$, then if $b\not\in \MatchGT{DPO^{\dag}}{r}{B}$, we define $\Trans(r,\exists(b:O\rightarrow B,\ac{c}_B)):=\ac{false}$ (as a condition over $I$). Otherwise, i.e.\ if $b\in \MatchGT{DPO^{\dag}}{r}{B}$, we let (referring to Definition~\ref{def:rew} for the definition of $DPO^{\dag}$)
\begin{equation}
	\Trans(r,\exists(b:O\rightarrow B,\ac{c}_B))
		:=\exists({\color{h2color}b^{*}:I\rightarrow B'},
		\Trans(B\Leftarrow {\color{h2color}B'},\ac{c}_B)\,,\quad
	\text{where }\quad
	\inputtikz{transDiag}\,.
\end{equation}
	\item $\Trans(r,\neg \ac{c}_O):=\neg \Trans(r,\ac{c}_O)$ and $\Trans(r,\lor_{i\in I}\ac{c}_O^{(i)}):=\lor_{i\in I}\Trans(r,\ac{c}_O^{(i)})$.
	\end{itemize}
\end{definition}

\begin{theorem}[Properties of shift and transport constructions~\cite{bk2019a}; compare~\cite{GOLAS2014}]
Let $\bfC$ be a category satisfying Assumption~\ref{as:main}.
\begin{romanenumerate}
\item \emph{$\Shift$ and satisfaction:} for $X\in \obj{\bfC}$, $\ac{c}_X$ a condition over $X$ and $(m:X\rightarrow Y)\in \mono{\bfC}$, then for monomorphisms $(q:Y\rightarrow Z)\in \mono{\bfC}$ it holds that
\begin{equation}
	q\vDash\Shift(m:X\rightarrow Y,\ac{c}_X)\;\Leftrightarrow \;
	q\circ m\vDash \ac{c}_X\,.
\end{equation}
\item \emph{Unit for $\Shift$:} for every object $X\in \obj{\bfC}$ and for every $(X\xrightarrow{\cong} X')\in\iso{\bfC}$, 
\begin{equation}
	\Shift(X\xrightarrow{\cong} X',\ac{c}_X)\equiv \ac{c}_X\,.
\end{equation}
\item \emph{Compositionality of $\Shift$:} given composable monomorphisms $(f:X\rightarrow Y),(g:Y\rightarrow Z)\in \mono{\bfC}$ and a condition $\ac{c}_X$ over $X$, 
\begin{equation}
	\Shift(g:Y\rightarrow Z,\Shift(f:X\rightarrow Y,\ac{c}_X))\equiv \Shift(g\circ f:X\rightarrow Z,\ac{c}_X)\,.
\end{equation}
\item \emph{Satisfiability for $\Trans$:} for $r=(O\leftarrow K\rightarrow I)\in \Lin{\bfC}$, $\ac{c}_O$ a condition over $O$ and $X\in \obj{\bfC}$, denoting for an admissible match $m\in\MatchGT{\bT}{r}{X}$ (for $\bT\in\{DPO,SqPO\}$) the comatch for $m$ under a $\bT$-type rule application by $m^{*}$ (cf. \eqref{eq:DD}),
\begin{equation}
	\forall m\in \MatchGT{\bT}{r}{X}:\quad m\vDash\Trans(r,\ac{c}_O) \Leftrightarrow m^{*}\vDash \ac{c}_O\,.
\end{equation}
We write $\Trans(r,\ac{c}_O)\,\dot{\equiv}_{\bT}\,\ac{c}_O$ to indicate this \emph{equivalence up to $\bT$-type admissibility}.
\item \emph{Units for $\Trans$:} for each span of isomorphisms $(Z\xleftarrow{\cong}Y\xrightarrow{\cong}X)\in \Lin{\bfC}$ and for each condition $\ac{c}_Z$ over $Z$, 
\begin{equation}
	\Trans((Z\xleftarrow{\cong}Y\xrightarrow{\cong}X),\ac{c}_X)\equiv \ac{c}_Z\,.
\end{equation}
\item \emph{Compositionality of $\Trans$:} for composable spans $s=(E\leftarrow D\rightarrow C),r=(C\leftarrow B\rightarrow A)\in \Lin{\bfC}$ and $\ac{c}_E$ a condition over $E$, 
\begin{equation}
	\Trans(r,\Trans(s,\ac{c}_E)\equiv \Trans(s\circ r,\ac{c}_E)\,.
\end{equation}
\item \emph{Compatibility of $\Shift$ with $\Trans$:} for $r=(O\leftarrow K\rightarrow I)\in \Lin{\bfC}$, $(m:I\rightarrow X)\in \mono{\bfC}$, $\ac{c}_O$ a condition over $O$ and $\bT\in\{DPO,SqPO\}$,
\begin{equation}
m\in \MatchGT{\bT}{r}{X}\; \Rightarrow 
\Shift(m,\Trans(r,\ac{c}_O))\;\dot{\equiv}_{\bT}\; \Trans((r_m(X)\xLeftarrow[r,m]{\bT} X),\Shift(m^{*},\ac{c}_O))\,.
\end{equation}
\end{romanenumerate} 
\end{theorem}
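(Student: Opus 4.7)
My plan is to prove all seven properties by structural induction on the condition, since conditions are defined inductively with base case $\ac{true}$, existential case $\exists(a,\ac{c}_A)$, negation, and (indexed) conjunction. In every property, the Boolean cases ($\ac{true}$, $\neg$, $\bigwedge$) will reduce directly to the inductive hypothesis, because both $\Shift$ and $\Trans$ are defined to commute with negation and conjunction, and satisfaction also commutes. Hence the real work sits in the existential case $\ac{c}_X=\exists(a:X\rightarrow A,\ac{c}_A)$, and I will address these case by case below.

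For parts (i), (ii), (iii), the key machinery is Theorem~\ref{thm:propsC}\ref{thm:charEpi}, the characterization of epis via pushouts. For (i) I would expand $\Shift(m,\exists(a,\ac{c}_A))$ as a disjunction over the isomorphism classes of spans $(b_1,b_2)$ with pushout $(e_1,e_2)$; a morphism $q$ satisfies the shifted condition iff some pushout lift $q':E\rightarrow Y$ exists with $q\circ m = q'\circ e_1$ and $q'\circ e_2\vDash \ac{c}_A$; conversely a witness $r:A\rightarrow Y$ for $q\circ m\vDash\exists(a,\ac{c}_A)$ factors through the pushout of $m$ and $a$ (after taking the pullback of $q\circ m$ and $r$), realizing exactly one of the terms in the disjunction. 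For (ii), when $X\xrightarrow{\cong}X'$, the pushout diagrams in the $\Shift$ construction degenerate to isomorphisms, leaving only one isomorphism class contributing and yielding back $\ac{c}_X$ up to equivalence; alternatively, one may simply apply (i) to conclude $\equiv$. For (iii) I would either repeat the pushout bookkeeping, or more elegantly apply (i) twice: $q\vDash \Shift(g,\Shift(f,\ac{c}_X))$ iff $q\circ g \vDash\Shift(f,\ac{c}_X)$ iff $q\circ g\circ f\vDash \ac{c}_X$ iff $q\vDash\Shift(g\circ f,\ac{c}_X)$; the latter approach is cleaner and avoids a pushout-pasting argument.

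Parts (iv), (v), (vi) proceed analogously on the $\Trans$ side using the $DPO^{\dagger}$ derivation squares. For (iv), in the existential case, $m\vDash \Trans(r,\exists(b,\ac{c}_B))$ iff there is $b^{*}:I\rightarrow B'$ witnessing the shifted existential, which by the pushout-complement construction defining $\Trans$ corresponds bijectively to the existence of a comatch $m^{*}$ satisfying $\ac{c}_O$; admissibility of $m$ ensures that the pushout complement exists so that the correspondence does not produce spurious $\ac{false}$. For (v), a span of isos gives trivial pushout complements, collapsing $\Trans$ to an isomorphism shift. For (vi), I would reduce to (iv) by the same "test by satisfaction" trick: unwind $m\vDash\Trans(r,\Trans(s,\ac{c}_E))$ to $(m^*)^{**}\vDash \ac{c}_E$ via two applications of (iv), then recognize $(m^*)^{**}$ as the comatch through the composed span $s\circ r$.

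Finally for (vii), the compatibility statement, I would again use the satisfaction-based strategy. Given an admissible match $m$ and any extension $q:X\rightarrow Y$ testing satisfaction, I would expand $q\vDash\Shift(m,\Trans(r,\ac{c}_O))$ as $q\circ m\vDash\Trans(r,\ac{c}_O)$ via (i), then apply (iv) to get $(q\circ m)^{*}\vDash\ac{c}_O$; on the other side, I would use that $(r_m(X)\xLeftarrow[r,m]{\bT}X)$ is itself a linear span whose $\Trans$ followed by $\Shift(m^*,-)$ is characterized through (iv) and (i) again as $(q')\vDash\ac{c}_O$ where $q'$ is the comatch induced by $q$. The crux is identifying these two comatches: this follows by uniqueness of the pushout/pushout-complement pair constituting the double square of a $\bT$-direct derivation. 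I expect this last step, matching up $(q\circ m)^{*}$ with the comatch produced by transporting along $r_m(X)\Leftarrow X$ and then shifting along $m^*$, to be the main technical obstacle, because it depends on how compositions of pushout complements and final pullback complements fit together (which in the SqPO case requires the FPC machinery from Theorem~\ref{thm:propsC}). Once this identification is made, all seven properties follow, with (i)--(iii) and (iv)--(vi) serving as lemmas that feed into (vii).
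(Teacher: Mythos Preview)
The paper does not actually prove this theorem: it is stated in Appendix~\ref{sec:conds} as a result quoted from~\cite{bk2019a} (with comparison to~\cite{GOLAS2014}), and no proof environment follows the statement. There is therefore no ``paper's own proof'' to compare your proposal against.

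That said, your outline follows the standard route taken in the literature for these results and is essentially sound. Two places deserve a word of caution. First, in part~(iv) you gloss over the branch where the mono $b$ in $\exists(b:O\to B,\ac{c}_B)$ is \emph{not} $DPO^{\dag}$-admissible and $\Trans$ returns $\ac{false}$; you still have to argue that for every $\bT$-admissible $m$ the comatch $m^{*}$ then cannot satisfy $\exists(b,\ac{c}_B)$, which requires a pushout--pullback decomposition argument inside the direct-derivation square rather than just ``admissibility of $m$''. Second, for part~(vi) you reduce to~(iv), but~(iv) is only stated for $\bT$-admissible matches, whereas~(vi) asserts full equivalence $\equiv$ over \emph{all} monomorphisms $m:A\to Y$. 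The gap closes once you note that under Assumption~\ref{as:main} every monomorphism is $SqPO$-admissible (FPCs of composable monos always exist by Theorem~\ref{thm:propsC}), so invoking~(iv) with $\bT=SqPO$ covers all monos; and then identifying the iterated $SqPO$ comatch with the comatch along $s\circ r$ is precisely the vertical-pasting Lemma~\ref{lem:sqCompAux}. With these two refinements made explicit, your inductive, satisfaction-based strategy goes through.
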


\subsection{Compositional Double- and Sesqui-Pushout rewriting}
\label{sec:DPOandSqPOrewriting}

We present here our recently introduced refinement of the DPO-type such framework, as well as the first of its kind such framework for compositional SqPO-type rewriting theories~\cite{bk2019a}. For the DPO-type case, our developments generalized earlier work by Habel and %
Pennemann~\cite{Pennemannaa,habel2009correctness} and by Ehrig and %
collaborators~\cite{ehrig2014mathcal,GOLAS2014}, while the SqPO-constructions were new. The core definitions of rewriting are the following two sets of definitions, which provide the semantics of rule applications to objects and compositions of linear rules:

\begin{definition}[Rules and rule applications]\label{def:rew}
Let $\bfC$ be a category satisfying Assumption~\ref{as:main}. Denote by $\Lin{\bfC}$ the set of isomorphism classes of spans of monomorphisms,
\begin{equation}
	\Lin{\bfC}:=\left.\left\{ O\xleftarrow{o}K\xrightarrow{i}I \right\vert o,i\in \mono{\bfC}\right\}\diagup_{\cong}\,,
\end{equation}
referred to henceforth as the \emph{set of linear rules (on $\bfC$)}. Here, two rules $r_i=(O_j\leftarrow K_j\rightarrow I_j)\in \Lin{\bfC}$ (for $j=1,2$) are representatives of the same isomorphism class if there exist isomorphisms $\omega:O_1\rightarrow O_2$, $\kappa:K_1\rightarrow K_2$ and $\iota:I_1\rightarrow I_2$ that make the evident diagram commute\footnote{We will henceforth speak about isomorphism classes of rules and objects and their concrete representatives interchangeably, since all constructions presented afford a clear notion of invariance under isomorphisms. This feature of the constructions also motivates the notion of \emph{abstraction equivalence} standard in the rewriting theory (cf.\ Section~\ref{sec:ta}).}. Let $r=(O\leftarrow K\rightarrow I)\in \Lin{\bfC}$ be a linear rule,  $X\in \obj{\bfC}$ an object, and $(m:I\rightarrow X)\in \mono{\bfC}$ be a monomorphism. A \emph{rule application} of the rule $r$ to the object $X$ along an admissible match $m$ is defined via the following type of commutative diagram (referred to as a \emph{direct derivation} in the literature):
\begin{equation}\label{eq:DD}
\inputtikz{eqDD1}\quad :=\quad 
\inputtikz{eqDD2}\,.
\end{equation}
Here, the precise construction depends on the \emph{type} $\bT$ of rewriting:
\begin{romanenumerate}
	\item \emph{$\bT=DPO$ (\textbf{Double-Pushout (DPO) rewriting}):} given $(m:I\rightarrow X)\in\mono{\bfC}$, $m$ is defined to be an \emph{admissible match} if the square marked $(A)$ in~\eqref{eq:DD} is constructable as a \emph{pushout complement (POC)}; in this case, the square marked $(B)$ is constructed as a \emph{pushout (PO)}.
	\item \emph{$\bT=DPO^{\dag}$ (\textbf{``opposite'' of Double-Pushout rewriting}):} for this auxiliary rewriting semantics, given $(m^{*}:O\rightarrow Y)\in\mono{\bfC}$, $m^{*}$ is defined to be an \emph{admissible match} if the square marked $(B)$ in~\eqref{eq:DD} is constructable as a \emph{pushout complement (POC)}; in this case, the square marked $(A)$ is constructed as a \emph{pushout (PO)}. Coincidentally, a $DPO^{\dag}$-admissible match $m^{*}$ uniquely induces a $DPO$-admissible match $m$ for $r$ applied to $X$, which is why $m^{*}$ is sometimes referred to as the \emph{comatch of $m$}.
	\item \emph{$\bT=SqPO$ (\textbf{Sesqui-Pushout (SqPO) rewriting}):} given $(m:I\rightarrow X)\in\mono{\bfC}$, the square $(A)$ is constructed as a \emph{final pullback complement (FPC)}, followed by constructing $(B)$ as a pushout.
\end{romanenumerate}%
\noindent We denote the \emph{set of $\bT$-admissible matches} by $\MatchGT{\bT}{r}{X}$. We will moreover adopt the traditional ``direct derivation'' notation $r_m(X) \xLeftarrow[r,m]{\bT} X$ for $\bT=DPO$ or $\bT=SqPO$ as a compact notation for the process of applying rule $r$ to $X$ along admissible match $m$.
\end{definition}

Crucially, linear rules in both types of rewriting semantics admit a \emph{composition operation}:

\begin{definition}[Rule composition]\label{def:Rcomp}
Let $\bfC$ be a category satisfying Assumption~\ref{as:main}. Let $r_1,r_2\in \Lin{\bfC}$ be two linear rules. We define the \emph{set of $\bT$-type admissible matches of $r_2$ into $r_1$} for $\bT\in \{DPO,SqPO\}$, denoted $\MatchGT{\bT}{r_2}{r_1}$, as
\begin{equation}
\begin{aligned}
	\MatchGT{\bT}{r_2}{r_1}&:=\big\{
		{\color{h1color}\mu_{21}}={\color{h1color}(I_2\leftarrow M_{21}\rightarrow O_2)}\big \vert n_1,n_2 \text{ in }
		\pO{{\color{h1color}\mu_{21}}}={\color{h1color}(I_2\xrightarrow{n_2}N_{21}\xleftarrow{n_1}O_1)}\\
		&\qquad\qquad \text{satisfy } n_2\in \MatchGT{\bT}{r_2}{{\color{h1color}N_{21}}}\;\land \;
		n_1\in \MatchGT{DPO^{\dag}}{r_1}{{\color{h1color}N_{21}}}
	\big\}\,.
\end{aligned}
\end{equation}
For a $\bT$-type admissible match ${\color{h1color}\mu_{21}}={\color{h1color}(I_2\leftarrow M_{21}\rightarrow O_2)}\in\MatchGT{\bT}{r_2}{r_1}$, construct the  diagram
\begin{equation}\label{eq:defRcomp}
\inputtikz{defRcomp}\,.
\end{equation}
From this diagram, one may compute (via pullback composition $\circ$ of the two composable spans in the bottom row) a span of monomorphisms ${\color{h2color}(O_{21}\Leftarrow I_{21})}\in\Lin{\bfC}$, which we define to be the \emph{$\bT$-type composition of $r_2$ with $r_1$ along $\mu_{21}$} (for $\bT\in\{DPO,SqPO\}$ as in~\eqref{eq:defRcomp}):
\begin{equation}
	\compGT{\bT}{r_2}{\mu_{21}}{r_1}:={\color{h2color}(O_{21}\Leftarrow I_{21})}=({\color{h2color}O_{21}}\Leftarrow {\color{h1color}N_{21}})\circ ({\color{h1color}N_{21}}\Leftarrow{\color{h2color}I_{21}})\,.
\end{equation}
\end{definition}

We refer the interested readers to~\cite{bp2018,nbSqPO2019} for the precise derivations of these notions of rule compositions, and note here that the definitions are justifiable a posteriori via the concurrency theorems as presented in Section~\ref{sec:concur}.

The compositional rewriting framework may be extended to the setting of rules with conditions\footnote{Referring to~\cite{bk2019a} for the precise details, based upon the transport construction we may from hereon without loss of generality consider rewriting rules with conditions over the \emph{input} objects only.}:
\begin{definition}[Rewriting with conditions~\cite{bk2019a}]\label{def:RwCs}
	Let $\bfC$ be a category satisfying Assumption~\ref{as:main}. We denote by $\LinAc{\bfC}$ the \emph{set of linear rules with conditions}, defined as
	\begin{equation}
		\LinAc{\bfC}:=\{R=(r,\ac{c}_I)\mid r=(O\leftarrow K\rightarrow I)\in \Lin{\bfC}\}\,.
	\end{equation}
	We extend the definitions of rule applications (Definition~\ref{def:rew}) and rule compositions (Definition~\ref{def:Rcomp}) to the setting of rules with conditions as follows: for $R=(r,\ac{c}_I)\in \LinAc{\bfC}$ and $X\in \obj{\bfC}$, define the \emph{sets of $\bT$-type admissible matches of $R$ into $X$} (for $\bT\in\{DPO,SqPO\}$), denoted $\MatchGT{\bT}{R}{X}$, as
	\begin{equation}
		\MatchGT{\bT}{R}{X}:=\{m\in \MatchGT{\bT}{r}{X}\mid m\vDash\ac{c}_I\}\,.
	\end{equation}
	Then the \emph{$\bT$-type rule application of $R$ along $m\in \MatchGT{\bT}{R}{X}$ to $X$} is defined as $r_m(X)\xLeftarrow{\bT}X$. 

	As for the rule compositions, we define for two rules with application conditions $R_j=(r_j,\ac{c}_{I_j})\in \LinAc{\bfC}$ ($j=1,2$) the \emph{sets of $\bT$-admissible matches of $R_2$ into $R_1$} as
	\begin{equation}
		\MatchGT{\bT}{R_2}{R_1}:=\{\mu_{21}\in \MatchGT{\bT}{r_2}{r_1}\mid \ac{c}_{I_{21}}\not\equiv \ac{false}\}\,,
	\end{equation}
	where the condition $\ac{c}_{I_{21}}$ for a given rule composite is defined as (compare~\eqref{eq:defRcomp} for the defining construction of the various morphisms)
	\begin{equation}
	\begin{aligned}
		\ac{c}_{I_{21}}&:= \Shift(I_1\rightarrow {\color{h2color}I_{21}},\ac{c}_{I_1}) \bigwedge
		\Trans({\color{h1color}N_{21}}\Leftarrow {\color{h2color}I_{21}},\Shift(I_2\rightarrow {\color{h1color}N_{21}},\ac{c}_{I_2}))\,.
	\end{aligned}
	\end{equation}
	Then we define for admissible matches $\mu_{21}$ the compositions as
	\begin{equation}
		\forall \mu_{21}\in \MatchGT{\bT}{R_2}{R_1}:\quad \compGT{\bT}{R_2}{\mu_{21}}{R_1}:=(\compGT{\bT}{r_2}{\mu_{21}}{r_1},\ac{c}_{I_{21}})\,.
	\end{equation}
\end{definition}

\subsection{Auxiliary properties of direct derivations}\label{app:aux}

\begin{lemma}\label{lem:sqCompAux}
Let $\bfC$ be a category satisfying Assumption~\ref{as:main}, and let $\bT\in\{DPO,SqPO,DPO^{\dag}\}$. 
\begin{romanenumerate}
  \item\label{lem:Auxa} For every rule $R=(r,\ac{c}_I)\in\LinAc{\bfC}$, the diagram below is a $\bT$-type direct derivation for arbitrary $\bT\in \{DPO,DPO^{\dag},SqPO\}$:
  \begin{equation}\label{eq:DDaux}
\left(
\inputtikz{eqDDaux1}\right)
=\left(\inputtikz{eqDDaux2}\right)\,.
\end{equation}
\item \emph{Vertical pasting:} for $\bT\in\{DPO,DPO^{\dag},SqPO\}$, and suppose that in the diagrams below the monomorphism $(X'\hookleftarrow X)\circ(X\hookleftarrow I)$ satisfies the condition $\ac{c}_I$ (not explicitly drawn for clarity). Then the following properties hold true: composing squares of the underlying commutative diagrams vertically (indicated by the $\rightsquigarrow$ notation), one obtains (a) for all combinations of types of rewriting with $\bT=\bT'$, or (b) for $\bT$ arbitrary and $\bT'=DPO$ or $\bT'=DPO^{\dag}$, the following $\bT$-type direct derivations:
\begin{equation}\label{eq:defPasting}
\inputtikz{eqdefPasting1}\quad \rightsquigarrow\quad
\inputtikz{eqdefPasting2}
\end{equation}
\end{romanenumerate}
\end{lemma}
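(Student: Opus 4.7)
For part (i), the observation is that in the canonical diagram each of the two squares has one pair of parallel morphisms consisting of identities (the identities on $O$, $K$, and $I$ bridging the top copy of $r$ with the bottom copy). A commutative square in which one pair of parallel sides is given by identities is automatically a pushout, a pushout complement, and a final pullback complement, all at once, by the respective universal properties. Consequently, the diagram satisfies the defining conditions of a $\bT$-type direct derivation for each $\bT\in\{DPO,DPO^{\dag},SqPO\}$ simultaneously, and the identity match trivially satisfies $\ac{c}_I$ since it corresponds to the rule data itself.

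For part (ii), the plan is a systematic case analysis using standard pasting lemmas. Each half of the pasted diagram consists of two vertically stacked squares sharing a common edge, and the goal is to show that the exterior square on each side is again of the type required for a $\bT$-type direct derivation. In case (a) with $\bT=\bT'$, one invokes the pasting lemma appropriate to each side: pushout pasting handles the PO-side uniformly for all three rewriting types; POC pasting handles the POC-side for $\bT\in\{DPO,DPO^{\dag}\}$; and the analogous FPC-pasting property in adhesive categories handles the FPC-side for $\bT=SqPO$. In case (b) with $\bT$ arbitrary and $\bT'\in\{DPO,DPO^{\dag}\}$, the top square on one side is a pushout by definition, and the key ingredient is the stability lemma that pasting a pushout above a POC (respectively an FPC) yields a POC (respectively an FPC) for the exterior. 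The hypothesis that $(X'\hookleftarrow X)\circ(X\hookleftarrow I)$ satisfies $\ac{c}_I$ supplies the required condition-admissibility of the composite match.

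The main anticipated obstacle is the mixed-type subcase $\bT=SqPO$, $\bT'\in\{DPO,DPO^{\dag}\}$, where a pushout on top must paste with an FPC on the bottom to yield an FPC for the exterior. Unlike pushout or POC pasting, which are classical, the pushout--FPC stability lemma relies crucially on the finer properties of the adhesive category $\bfC$ collected in Theorem~\ref{thm:propsC}, in particular the universal property of FPCs and the characterization of epimorphisms via pushouts. The argument there is a diagram chase that invokes these universal properties to show that any competitor cospan factoring through the exterior square descends uniquely through the FPC of the bottom square; care is needed to ensure that the factorizations provided by the top pushout and those forced by the FPC universal property agree on the shared edge.
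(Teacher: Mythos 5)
Your proposal is correct and follows essentially the same route as the paper: part (i) rests on the observation that the squares in question (having identities as one pair of parallel sides, i.e.\ being pushouts for arbitrary $o$ and $i$) are simultaneously pushouts, pushout complements and final pullback complements, and part (ii) is dispatched by the standard vertical pasting lemmata for pushouts, POCs and FPCs. The paper's own proof is a two-sentence citation of exactly these facts (deferring the FPC--pushout composition lemma to the appendix of~\cite{nbSqPO2019}), so your more detailed case analysis, including the correctly identified delicate mixed case $\bT=SqPO$, $\bT'\in\{DPO,DPO^{\dag}\}$, is a faithful elaboration of the same argument.
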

\begin{proof}
  The first statement follows since the squares marked $(1)$ and $(2)$ are pushouts for arbitrary morphisms $o$ and $i$. The second property follows by invoking various elementary square composition lemmata for pushouts and FPCs  (see e.g.\ the list of technical Lemmas provided in the appendix of~\cite{nbSqPO2019}).
\end{proof}

\subsection{Compositional concurrency}\label{sec:concur}

A central role in our rewriting frameworks is played by the notion of certain concurrency theorems, which entail an equivalence between (i) sequential applications of rewriting rules along admissible matches and (ii) application of \emph{composites} of rewriting rules along admissible matches. This structure is in turn intimately related to notions of \emph{traces} and analyses thereof in rewriting theory. In the form as presented below (which is compatible with the notion of compositionality of rewriting rules), both the DPO- and the SqPO-type concurrency theorems were first introduced in~\cite{bk2019a} (with some earlier results in the DPO-setting under weaker assumptions reported in~\cite{habel2009correctness,ehrig2014mathcal}).

\begin{theorem}[Concurrency theorems~\cite{bp2018,nbSqPO2019,bk2019a}]\label{thm:concur}
	Let $\bfC$ be a category satisfying Assumption~\ref{as:main}, and let $\bT\in\{DPO,SqPO\}$. Then there exists a bijection $\varphi:A\xrightarrow{\cong}B$ on pairs of $\bT$-admissible matches between the sets $A$ and $B$,
	\begin{equation}
		\begin{aligned}
			A&=\{(m_2,m_1)\mid m_1\in \MatchGT{\bT}{R_1}{X_0}\,,; 
			m_2\in \MatchGT{\bT}{R_2}{X_1}\}\\
			\cong\quad 
			B&=\{(\mu_{21},m_{21})\mid \mu_{21}\in \MatchGT{\bT}{R_2}{R_1}\,,\; m_{21}\in \MatchGT{\bT}{R_{21}}{X_0}\}\,,
		\end{aligned}
		\end{equation}
	where $X_1=R_{1_{m_1}}(X_0)$ and $R_{21}=\compGT{\bT}{R_2}{\mu_{21}}{R_1}$ such that for each corresponding pair $(m_2,m_1)\in A$ and $(\mu_{21},m_{21})\in B$, it holds that
		\begin{equation}
			R_{21_{m_{21}}}(X_0) \cong
			R_{2_{m_2}}(R_{1_{m_1}}(X_0))\,.
		\end{equation}
\end{theorem}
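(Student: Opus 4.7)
The plan is to construct the bijection $\varphi: A \xrightarrow{\cong} B$ explicitly in both directions and to verify that corresponding pairs yield isomorphic outcome objects. The entire argument will be assembled from the universal properties of pushouts, pullbacks, and (for $\bT = SqPO$) final pullback complements, together with the stability of monomorphisms under these constructions and the square-pasting lemmas recalled in Lemma~\ref{lem:sqCompAux} and Theorem~\ref{thm:propsC}.

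For the synthesis direction $A \to B$, starting from $(m_2, m_1) \in A$ I would first form the direct derivation producing $X_1 = R_{1_{m_1}}(X_0)$ together with a comatch $m_1^{*}: O_1 \hookrightarrow X_1$. Taking the pullback of the cospan $(m_2, m_1^{*})$ in $X_1$ yields an overlap object $M_{21}$ with monomorphisms into $I_2$ and $O_1$, defining a candidate span $\mu_{21}$. Its pushout then determines $N_{21}$ together with a mediating morphism $N_{21} \to X_1$, which is a monomorphism by the effective-unions property (Theorem~\ref{thm:propsC}(i)). One next verifies that $\mu_{21} \in \MatchGT{\bT}{R_2}{R_1}$: concretely, that the coprojections $n_1, n_2$ of the pushout are $DPO^{\dag}$- and $\bT$-admissible for $r_1$ and $R_2$ respectively, and that $\ac{c}_{I_{21}}$ does not evaluate to $\ac{false}$, the latter following from the compatibility of $\Shift$ and $\Trans$ with satisfaction under admissibility. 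The induced admissible match $m_{21}: I_{21} \hookrightarrow X_0$ is then read off by pasting pullbacks (resp.\ pullbacks and FPCs) in the defining diagram~\eqref{eq:defRcomp} of $R_{21}$.

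For the converse direction $B \to A$, I would apply $R_{21}$ at $m_{21}$ and intercalate the object $N_{21}$ inside the resulting derivation: the defining span-composition of $R_{21}$ and the universal properties of its constituent squares supply both a $\bT$-direct derivation along $r_1$ producing an intermediate object $X_1$, and then a $\bT$-direct derivation along $r_2$ at the induced match $m_2: I_2 \hookrightarrow X_1$. Mutual inverseness of the two assignments reduces to uniqueness up to isomorphism of pushouts, pullbacks, and FPCs, while the terminal isomorphism $R_{21_{m_{21}}}(X_0) \cong R_{2_{m_2}}(R_{1_{m_1}}(X_0))$ follows by vertical pasting as in Lemma~\ref{lem:sqCompAux}(ii). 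The main obstacle will be the SqPO case, where the upper squares are FPCs rather than pushouts and the pasting behavior is no longer symmetric: both the admissibility check for the composite match (forward) and the correct factorization of the outer FPC into two stacked FPCs (reverse) hinge delicately on the FPC universal property of Theorem~\ref{thm:propsC}(ii). A secondary, bookkeeping-heavy obstacle is the propagation of conditions across the bijection, which will require invoking the satisfaction, compositionality, and $\Shift$--$\Trans$ compatibility properties in precisely the order dictated by the inductive definition of $\ac{c}_{I_{21}}$.
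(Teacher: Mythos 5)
The paper does not actually prove this theorem in situ---it is quoted from~\cite{bp2018,nbSqPO2019,bk2019a}, and the stated ``proof'' merely refers the reader to those works while noting that the bijection is constructive. Your outline reproduces the standard synthesis/analysis argument employed there (overlap $M_{21}$ via pullback of the comatch $m_1^{*}$ against $m_2$, pushout to $N_{21}$ with effective unions supplying the monomorphism into $X_1$, vertical pasting for the outcome isomorphism, FPC universal properties in the SqPO case, and the $\Shift$/$\Trans$ satisfaction properties for propagating conditions), so it is essentially the same approach.
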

\begin{proof}
	See~\cite{bk2019a} for the full technical details. Let us note here that the bijective correspondence is \emph{constructive}, i.e.\ there exist explicit algorithms for realizing $B$ from $A$ and vice versa.
\end{proof}

The following technical result is a necessary prerequisite for deriving the proof of Theorem~\ref{thm:tlChar}:

\begin{corollary}\label{cor:concur}
  Let $r_{n\cdots 1}=(O_{n\cdots 1}\Leftarrow I_{n\cdots 1})$ be a span of monomorphisms, and let $(Y^{(n)}_{j+1,j}\Leftarrow Y^{(n)}_{j,j-1})$ be $n$ spans of monomorphisms (for $0\leq j\leq n$) with $Y^{(n)}_{n+1,n}=O_{n\cdots 1}$, $Y^{(n)}_{1,0}=I_{n\cdots 1}$, and such that
  \[
    (O_{n\cdots 1}\Leftarrow I_{n\cdots 1})=
    (O_{n\cdots 1}\Leftarrow Y^{(n)}_{n,n-1})\circ \dotsc \circ
    (Y^{(n)}_{2,1}\Leftarrow I_{n\cdots 1})\,.
  \]
  Let $\ac{c}_{I_{n\cdots 1}}$ be a condition over $I_{n\cdots 1}$. Then for each object $X_0$ and for each $\bT$-admissible match $(I_{n\cdots 1}\hookrightarrow X_0)\in \MatchGT{\bT}{R_{n\cdots 1}}{X_0}$ (for $R_{n\cdots 1}=(r_{n\cdots 1},\ac{c}_{I_{n\cdots 1}})$ and for $\bT\in\{DPO,SqPO\}$), the $\bT$-type application of $R_{n\cdots 1}$ to $X_0$ along this match
\begin{subequations}\label{eq:concurCor}
  \begin{equation}\label{eq:concurCorA}
  \inputtikz{oncurCorA}
    \end{equation}
   uniquely (up to isomorphisms) encodes an $n$-step $\bT$-type derivation sequence of the following form, and vice versa:
  \begin{equation}\label{eq:concurCorB}
  \inputtikz{concurCorB}
    \end{equation}
  \end{subequations}
  \begin{proof}
    The statement is trivially true for $n=1$. For $n=2$, note first that for any two composable spans of monomorphisms $S_2=(Z\Leftarrow Y)$ and $S_1=(Y\Leftarrow X)$, invoking the definition of rule compositions (Definition~\ref{def:Rcomp}) allows to verify that considering $S_2$ and $S_1$ as linear rules without conditions, and letting $\mu=(Y\hookleftarrow Y\hookrightarrow Y)$ be a span of identity morphisms on $Y$, the span composition $S_2\circ S_1$ is in fact computable as a rule composition:
    \begin{equation}\label{eq:defRcompAux}
\inputtikz{defRcompAux}
  \;\rightsquigarrow\quad  S_2\circ S_1=\compGT{\bT}{S_2}{\mu}{S_1}\,.
\end{equation}
Here, according to Lemma~\ref{lem:sqCompAux}\ref{lem:Auxa} the $\bT$- and $DPO^{\dag}$-type direct derivation subdiagrams as indicated always exist. Consequently, the claim of the corollary for $n=2$ follows by invoking the concurrency theorem (Theorem~\ref{thm:concur}) for the special case of $r_{2\cdots 1}=(O_{2\cdots 1}\Leftarrow I_{2\cdots 1})$, $(O_{2\cdots 1}\Leftarrow I_{2\cdots 1})=
    (O_{2\cdots 1}\Leftarrow Y^{(2)}_{2,1})\circ
    (Y^{(2)}_{2,1}\Leftarrow I_{2\cdots 1})$ and for $\mu_{21}$ a span of identity morphisms of $Y^{(2)}_{2,1}$. The proof for the case $n\geq 2$ may then be obtained via induction over $n$.
  \end{proof}
\end{corollary}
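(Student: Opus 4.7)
The approach I would take is straightforward induction on $n$, with the essential content concentrated in the $n=2$ case, which reduces cleanly to the concurrency theorem (Theorem~\ref{thm:concur}).

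The base case $n=1$ is vacuous: the diagram~\eqref{eq:concurCorA} and the diagram~\eqref{eq:concurCorB} literally coincide, since both encode a single $\bT$-type direct derivation by Definition~\ref{def:rew}, and the condition $\ac{c}_{I_{n\cdots 1}}$ is carried by the (single) admissible match in both presentations. For $n=2$, the key observation is that any composable pair of monomorphism spans $S_2 = (O_{2\cdots 1}\Leftarrow Y^{(2)}_{2,1})$ and $S_1 = (Y^{(2)}_{2,1}\Leftarrow I_{2\cdots 1})$, viewed as linear rules without conditions, has its span composition $S_2\circ S_1$ computable as a rule composition along the identity overlap $\mu = (Y^{(2)}_{2,1}\hookleftarrow Y^{(2)}_{2,1}\hookrightarrow Y^{(2)}_{2,1})$, as already spelled out in~\eqref{eq:defRcompAux}. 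Indeed, with $\mu$ an identity span the pushout $\pO{\mu}$ is trivially $Y^{(2)}_{2,1}$, and Lemma~\ref{lem:sqCompAux}\ref{lem:Auxa} guarantees that the $\bT$- and $DPO^{\dag}$-type direct derivation squares required by Definition~\ref{def:Rcomp} are automatically constructible. Having identified the span composition with a rule composition along $\mu$, the concurrency theorem (Theorem~\ref{thm:concur}) immediately supplies the bijective (and constructive) correspondence between the one-step application of $S_2\circ S_1$ on $X_0$ and the two-step sequence $S_1$ followed by $S_2$ on $X_0$, with isomorphic terminal object.

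For the inductive step $n\to n+1$, I would decompose the $(n+1)$-fold span as the outer single span $(O_{(n+1)\cdots 1}\Leftarrow Y^{(n+1)}_{n+1,n})$ composed with the inner $n$-fold composite $(Y^{(n+1)}_{n+1,n}\Leftarrow I_{(n+1)\cdots 1})$. Applying the $n=2$ case to this outer/inner pairing converts the one-step application of the full $(n+1)$-fold composite on $X_0$ into a two-step sequence: first the inner $n$-fold composite acting on $X_0$, then the outer one-step applied to the intermediate object $Y^{(n+1)}_{n+1,n}$-pushout-successor. Invoking the inductive hypothesis on the inner $n$-fold composite then expands the first of these two steps into $n$ consecutive direct derivations along the spans $(Y^{(n+1)}_{j+1,j}\Leftarrow Y^{(n+1)}_{j,j-1})$ for $1\leq j\leq n$, yielding the claimed $(n+1)$-step derivation sequence of~\eqref{eq:concurCorB}. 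The reverse direction proceeds by applying the bijections in the opposite order: collapse the last $n$ steps into a one-step application of the inner composite via the inductive hypothesis, then fuse the resulting two-step sequence via the $n=2$ case.

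The main obstacle is purely bookkeeping, not a conceptual hurdle: one must track that the admissibility of the match $I_{n\cdots 1}\hookrightarrow X_0$ for the full composite lifts consistently through the induction to admissibility of each intermediate match along the decomposition, and vice versa. This lifting is built into the constructive character of the bijection in Theorem~\ref{thm:concur}, but requires careful identification of the intermediate pushout objects as precisely the $Y^{(n)}_{j+1,j}$-successors appearing in~\eqref{eq:concurCorB}. The condition $\ac{c}_{I_{n\cdots 1}}$ plays no role beyond the initial match since it is attached once to $I_{n\cdots 1}$, so no $\Shift$/$\Trans$ manipulations are needed inside the induction.
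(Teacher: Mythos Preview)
Your proposal is correct and follows essentially the same route as the paper: trivial $n=1$ base case, identification of span composition with rule composition along the identity overlap (via Lemma~\ref{lem:sqCompAux}\ref{lem:Auxa}) to reduce $n=2$ to the concurrency theorem, and then induction. You simply spell out the inductive step and the role of the condition $\ac{c}_{I_{n\cdots 1}}$ in more detail than the paper does, but there is no substantive difference in strategy.
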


\subsection{Compositional associativity}

The second main ingredient of our novel \emph{compositional} rewriting frameworks is the notion of \emph{associativity}. Intuitively, if one wishes to extend the analysis of traces as suggested via the concurrency theorems to a full-fledged analysis that is centered on compositions of rewriting rules (which constitutes the main contribution of this paper in the form of the tracelet framework), one must necessarily have a certain property fulfilled, in that multiple sequential compositions of rewriting rules may be computed in any admissible order of pairwise compositions. The latter feature is crucial for the purposes of analysis of classes of traces, since the traditional interpretation of the concurrency theorem would only permit to reason on pairwise sequential compositions (but not on extension thereof to higher order composites).

\begin{theorem}[Associativity of rule compositions~\cite{bp2018,nbSqPO2019,bk2019a}]\label{thm:assocR}
	Let $\bfC$ be a category satisfying Assumption~\ref{as:main}. let $R_1,R_2,R_3\in \LinAc{\bfC}$ be linear rules with conditions, and let $\bT\in\{DPO,SqPO\}$. Then there exists a bijection $\varphi:A\xrightarrow{\cong} B$ of sets of pairs of $\bT$-admissible matches $A$ and $B$, defined as
	\begin{equation}
		\begin{aligned}
			A&:=\{(\mu_{21},\mu_{3(21)})\mid \mu_{21}\in 
			\MatchGT{\bT}{R_2}{R_1}\,,\; \mu_{3(21)}\in \MatchGT{\bT}{R_3}{R_{21}}\}\qquad (R_{21}=\compGT{\bT}{R_2}{\mu_{21}}{R_1})\\
			B&:=\{(\mu_{32},\mu_{(32)1})\mid 
			\mu_{32}\in\MatchGT{\bT}{R_3}{R_2}\,,\;
			 \mu_{(32)1}\in \MatchGT{\bT}{R_{32}}{R_1}\}
			 \qquad (R_{32}=\compGT{\bT}{R_3}{\mu_{32}}{R_2})
		\end{aligned}
		\end{equation}
		such that for each corresponding pair $(\mu_{21},\mu_{3(21)})\in A$ and $\varphi(\mu_{21},\mu_{3(21)})=(\mu_{32}',\mu_{(32)1}')\in B$, 
		\begin{equation}
			\compGT{\bT}{R_3}{\mu_{3(21)}}{\left(\compGT{\bT}{R_2}{\mu_{21}}{R_1}\right)}\cong
			\compGT{\bT}{\left(\compGT{\bT}{R_3}{\mu_{32}'}{R_2}\right)}{\mu_{(32)1}'}{R_1}\,.
		\end{equation}
	In this particular sense, the composition operations $\compGT{\bT}{.}{.}{.}$ are \textbf{associative}.
\end{theorem}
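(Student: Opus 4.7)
The plan is to establish the theorem in two stages: first for rules without conditions, and then lift to rules with conditions by exploiting the compositionality properties of the $\Shift$ and $\Trans$ constructions collected in Appendix~\ref{sec:conds}. The bijection $\varphi$ on pairs of admissible matches will in fact be inherited from the conditions-free case, so the main content is to construct $\varphi$ there and then verify that conditions are preserved.

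For the conditions-free case I would proceed by unfolding all defining constructions. Given $(\mu_{21},\mu_{3(21)})\in A$, the composite $\compGT{\bT}{R_2}{\mu_{21}}{R_1}$ is built from the pushout $N_{21}$ of the span $\mu_{21}=(I_2\hookleftarrow M_{21}\hookrightarrow O_1)$ together with a $\bT$-type direct derivation ``consuming'' $r_2$ and a $DPO^{\dag}$-type derivation ``consuming'' $r_1$, in the sense of~\eqref{eq:defRcomp}. The second composition with $R_3$ along $\mu_{3(21)}$ similarly produces an object $N_{3(21)}$. The key construction is a \emph{triple overlap object} $N_{321}$ obtained as the colimit of the combined diagram joining all three rule interfaces along $M_{21}$, $M_{3(21)}$ and their pullbacks against the intermediate context objects; since every morphism in sight is a monomorphism, this colimit exists and remains in $\mono{\bfC}$ by the adhesive-category properties recalled in Theorem~\ref{thm:propsC} (in particular effective unions and stability of monomorphisms under pushout and pullback).

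The proof then exhibits the symmetry of $N_{321}$: from it one may equivalently read off a pushout $N_{32}$ encoding a match $\mu_{32}'\in \MatchGT{\bT}{R_3}{R_2}$ and a subsequent pushout $N_{(32)1}'$ encoding $\mu_{(32)1}'\in \MatchGT{\bT}{R_{32}}{R_1}$, thereby producing the assignment $\varphi((\mu_{21},\mu_{3(21)}))=(\mu_{32}',\mu_{(32)1}')$. Admissibility of both decompositions (i.e.\ existence of the requisite POCs in the DPO and $DPO^{\dag}$ parts, and of the requisite FPCs in the SqPO part) follows by pushout-pullback decomposition in adhesive categories and by the FPC-pasting properties recorded in Lemma~\ref{lem:sqCompAux}. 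Finally, vertically pasting the resulting stack of squares against the span composition of $r_1,r_2,r_3$ identifies both iterated composites with the same span of monomorphisms through $N_{321}$, delivering the required isomorphism $\compGT{\bT}{R_3}{\mu_{3(21)}}{R_{21}}\cong \compGT{\bT}{R_{32}'}{\mu_{(32)1}'}{R_1}$. The entire construction is manifestly symmetric in the two orderings, hence invertible, so $\varphi$ is a bijection.

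To lift the result to rules with conditions, the conditions $\ac{c}_{I_{3(21)}}$ and $\ac{c}_{I_{(32)1}}$ assigned by Definition~\ref{def:RwCs} must be shown to coincide (up to $\bT$-admissibility) under $\varphi$. Expanding both definitions, each decomposes into a conjunction of three summands, one per input condition $\ac{c}_{I_j}$ ($j=1,2,3$), built by alternating applications of $\Shift$ along the monomorphisms into $N_{321}$ and of $\Trans$ along the $DPO^{\dag}$-derivations extracted from $N_{321}$. Using compositionality of $\Shift$, compositionality of $\Trans$, and the $\Shift$-$\Trans$ compatibility, each summand in one ordering reduces to the corresponding summand in the other ordering because the underlying morphisms arise from the same triple-overlap diagram. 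Hence $\ac{c}_{I_{3(21)}}\equiv \ac{c}_{I_{(32)1}}$, and in particular one is equivalent to $\ac{false}$ iff the other is, so $\varphi$ restricts to a bijection between the admissible subsets.

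The main obstacle is the diagrammatic bookkeeping required to construct $N_{321}$ and to verify its symmetry: each composite involves stacked PO, POC, and (in the SqPO case) FPC squares, and the claim that both orderings factor through the same triple-overlap object relies on repeated applications of van Kampen-type decompositions, effective unions, and the epimorphism/pushout correspondence of Theorem~\ref{thm:propsC}. A secondary difficulty is the SqPO setting, where pushout complements are replaced by FPCs and the square-pasting lemmas used for DPO must be substituted by their FPC analogues; this is where the finitarity and strict-initial-object assumptions in Assumption~\ref{as:main} are quietly exploited to guarantee that the relevant FPCs exist and remain spans of monomorphisms.
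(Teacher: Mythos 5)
First, a point of order: the paper does not actually prove Theorem~\ref{thm:assocR} --- it is imported wholesale from the cited prior works~\cite{bp2018,nbSqPO2019,bk2019a} and stated without a proof environment, so there is no in-paper argument to compare yours against. Judged against the strategy of those references, your outline is faithful in its broad strokes: the published proofs do proceed by first treating rules without conditions, constructing a single large commutative diagram (your ``triple overlap object'') from which both bracketings $\compGT{\bT}{R_3}{\mu_{3(21)}}{(\compGT{\bT}{R_2}{\mu_{21}}{R_1})}$ and $\compGT{\bT}{(\compGT{\bT}{R_3}{\mu'_{32}}{R_2})}{\mu'_{(32)1}}{R_1}$ can be read off, and then lifting to conditions via the compositionality and compatibility properties of $\Shift$ and $\Trans$. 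Your observation that the bijection $\varphi$ is inherited from the conditions-free case, with admissibility of matches for rules with conditions cut down by non-falsity of the composite condition, is also how the cited proofs are organized.

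That said, your sketch defers precisely the step where essentially all of the work in those proofs lives: showing that admissibility \emph{transfers} between the two bracketings. Given $(\mu_{21},\mu_{3(21)})\in A$, one must extract a candidate $\mu'_{32}$ by restricting the overlap of $I_3$ with $O_{21}$ to the part meeting $O_2$, and then prove that the requisite pushout complements (DPO and $DPO^{\dag}$) and final pullback complements (SqPO) exist for the re-bracketed decomposition --- this is not a routine consequence of ``pushout--pullback decomposition and FPC-pasting'' but a delicate diagram chase, and in the SqPO case it is exactly the point that remained open until~\cite{nbSqPO2019}; a one-line appeal to Lemma~\ref{lem:sqCompAux} does not discharge it. Two smaller inaccuracies: the existence of FPCs of composable monomorphisms is supplied by Theorem~\ref{thm:propsC}(ii) and does not rest on finitarity or the strict initial object in the way you suggest; and the condition comparison yields only $\ac{c}_{I_{3(21)}}\,\dot{\equiv}_{\bT}\,\ac{c}_{I_{(32)1}}$ (equivalence up to $\bT$-admissibility), which suffices for the bijection between admissible subsets but is weaker than the unqualified $\equiv$ you assert mid-argument. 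So: right architecture, but the crux is asserted rather than proved.
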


\subsection{Proof of Theorem~\ref{thm:TmainA}}\label{app:TmainAproof}

\paragraph{Ad part~\ref{thm:TmainApartI}:} Note first that by virtue of Corollary~\ref{cor:concur}, the composition of two $\bT$-type tracelets $T',T\in \cT^{\bT}$ of lengths $m$ and $n$, respectively, along a $\bT$-admissible match $\mu=(I_{m\cdots 1}'\hookleftarrow {\color{h1color}M}\hookrightarrow O_{n\cdots 1})\in \tMatchGT{T'}{T}{\bT}$ encodes a $\bT$-type composition of linear rules $R'_{m\cdots 1}=(r_{m\cdots 1}',\ac{c}_{I_{m\cdots 1}'})$ and $R_{n\cdots 1}=(r_{n\cdots 1},\ac{c}_{I_{n\cdots 1}})$ (with $r_{m\cdots 1}'=(O_{m\cdots 1}'\Leftarrow I_{m\cdots 1}')$ and $r_{n\cdots 1}=(O_{n\cdots 1}\Leftarrow I_{n\cdots 1})$) of the following form:
\begin{equation}\label{eq:proofThmMainA}
\inputtikz{proofThmMainAuxA}
\end{equation}
Consequently, the constructability of this diagram indeed hinges on whether or not $\mu$ is a $\bT$-type admissible match of $r_{m\cdots1}'$ into $r_{n\cdots 1}$, thus proving the statement of part $\ref{thm:TmainApartI}$.

\paragraph{Ad part~\ref{thm:TmainAii}:} The latter argument has the additional consequence that for all tracelets $T',T\in\cT^{\bT}$ and $\mu\in \tMatchGT{T'}{T}{\bT}$, equation~\eqref{eq:proofThmMainA} demonstrates explicitly that $\left[\left[\TcompGT{T'}{\mu}{T}{\bT}\right]\right]
  =\compGT{\bT}{[[T']]}{\mu}{[[T]]}$, thus proving part~\ref{thm:TmainAii} of the theorem statement.

\paragraph{Ad part~\ref{thm:TmainAiii}:} The proof of part~\ref{thm:TmainAiii} of Theorem~\ref{thm:TmainA} follows by combining the statements of the first two parts with the associativity theorem for $\bT$-type rule compositions (Theorem~\ref{thm:assocR}).

\subsection{Proof of Theorem~\ref{thm:tlChar}}\label{app:tlChar}

The first part of the claim follows by applying a corollary of the concurrency theorem for rules with conditions (Corollary~\ref{cor:concur} of Appendix~\ref{sec:concur}) in order to construct the lower row in the left diagram of~\eqref{eq:tlCharAux}, followed by vertically composing squares (Lemma~\ref{lem:sqCompAux} of Appendix~\ref{app:aux}) in each column of the diagram in order to obtain the derivation trace shown on the right of~\eqref{eq:tlCharAux}. The second part of the statement follows by an inductive application of the concurrency theorem: the case $n=1$ coincides with the definition of a direct derivation, while for $n=2$ Theorem~\ref{thm:concur} precisely describes the transition from a length $2$ derivation trace to a length $1$ derivation trace along the composite rule. The induction step $n\to n+1$ is then verified by applying the concurrency theorem to the derivation trace $X_{n+1}\Leftarrow X_n\Leftarrow X_0$ along the rules $r_{n+1}$ and $(O_{n\dotsc 1}\Leftarrow I_{n\dotsc 1})$.

\section{Compositional sequential independence}\label{app:CSI}

A key role in the analysis of rewriting theories is played by the notion of sequential independence, which we first recall in its traditional form as known from the rewriting literature:

\begin{definition}[cf.\ e.g.\ \cite{ehrig2014mathcal}, Def.~4.3 (DPO) and~\cite{BALDAN2014}, Def.~2.15 (SqPO)]\label{def:tradSI}
Consider a two-step sequence of rule applications of type $\bT$ to an initial object $X_0\in \bfC$ along admissible matches,
\begin{equation}
\inputtikz{twoStepDD}\,,
\end{equation}
where the squares marked $\mathbb{t}$ for $\bT=DPO$ are pushout complements and for $\bT=SqPO$ FPCs. The two rule applications are called \emph{sequentially independent} if there exist monomorphisms $(n_1:O_1\rightarrow K_2'),(n_2:I_2\rightarrow K_1')\in\mono{\bfC}$ such that
\begin{equation}
  d_2\circ {\color{h4color}n_1}={\color{h1color}m_1^{*}}
  \quad \land\quad
  d_1^{*}\circ {\color{h4color}n_2}={\color{h1color}m_2}\,.
\end{equation}
\end{definition}

Based upon the concurrency theorems for the DPO- and SqPO-type compositional rewriting theories, one may develop the following refined variant of the above definition, as was  anticipated e.g.\ in~\cite{Boehm1987aa} for the DPO-type setting:

\begin{lemma}[Compositional sequential independence]\label{lem:CI}
  In the setting of Def.~\ref{def:tradSI}, the rule applications are sequentially independent if and only if there exist monomorphisms $(O_1\rightarrow \overline{K}_2),(I_2\rightarrow\overline{K}_1)\in \mono{\bfC}$ such that
  \begin{equation}
    (N_{21}\leftarrow \overline{K}_2)\circ (\overline{K}_2\leftarrow O_1)=(N_{21}\leftarrow O_1)\quad \land
    \quad
    (N_{21}\leftarrow \overline{K}_1)\circ (\overline{K}_1\leftarrow I_2)=(N_{21}\leftarrow I_2)\,,
  \end{equation}
  with notations as in the explicit version of diagram~\eqref{eq:defRcomp} (see the proof). For the case of rules with conditions, it is in addition required that $\ac{c}_{I_{21}}\,\dot{\equiv}\, \ac{c}_{I_{12}}$.
\end{lemma}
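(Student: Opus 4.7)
The plan is to treat the equivalence as two implications, both mediated by the concurrency theorem (Theorem~\ref{thm:concur}) and its corollary (Corollary~\ref{cor:concur}). First, I invoke the concurrency bijection $\varphi:A\xrightarrow{\cong}B$ to identify the two-step derivation $X_2\xLeftarrow{R_2,m_2}X_1\xLeftarrow{R_1,m_1}X_0$ with a unique (up to isomorphism) pair $(\mu_{21},m_{21})$, so that the diagram of Definition~\ref{def:tradSI} is related, square by square, to the explicit rule-composition diagram~\eqref{eq:defRcomp}. In particular, the pushout object $N_{21}$ and the ``context'' objects $\overline{K}_1,\overline{K}_2$ of the rule composition are connected to $X_1$ and to the context objects $K_1',K_2'$ of the two-step derivation via further pushouts/FPCs obtained from $m_{21}$.

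For the direction ``traditional $\Rightarrow$ compositional'', suppose $n_1:O_1\to K_2'$ and $n_2:I_2\to K_1'$ exist satisfying $d_2\circ n_1=m_1^{*}$ and $d_1^{*}\circ n_2=m_2$. I would construct $\overline{K}_2$ as the relevant POC (resp.\ FPC) in~\eqref{eq:defRcomp} that lies above $K_2'$ in the $m_{21}$-application of the composed rule, and analogously $\overline{K}_1$. The existence of $n_1$ and $n_2$ together with the universal properties of the POs/POCs/FPCs in the rule-composition diagram forces unique induced morphisms $O_1\to \overline{K}_2$ and $I_2\to\overline{K}_1$ whose composition with $(N_{21}\hookleftarrow\overline{K}_2)$ and $(N_{21}\hookleftarrow\overline{K}_1)$ reproduces the spans $(N_{21}\hookleftarrow O_1)$ and $(N_{21}\hookleftarrow I_2)$ appearing in the pushout of $\mu_{21}$, which is exactly the asserted factorisation.

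For the reverse direction, given factorisations through $\overline{K}_2$ and $\overline{K}_1$ inside the rule-composition diagram, I would descend along the comparison morphisms from $N_{21}$ to $X_1$ (furnished by the concurrency bijection) and use the vertical pasting of squares (Lemma~\ref{lem:sqCompAux}(ii)) to obtain morphisms $O_1\to K_2'$ and $I_2\to K_1'$ in the two-step derivation. Commutativity of the induced diagrams then yields $d_2\circ n_1=m_1^{*}$ and $d_1^{*}\circ n_2=m_2$, as required. To handle the conditioned case, I would combine compatibility of $\Shift$ with $\Trans$ (the last property of that theorem) with compositionality of $\Shift$ and $\Trans$ to show that, under the symmetric existence of both factorisations, the composite condition $\ac{c}_{I_{21}}$ obtained by composing $R_2$ after $R_1$ and the composite $\ac{c}_{I_{12}}$ obtained in the opposite order agree up to $\bT$-admissibility.

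The main obstacle will be bookkeeping: carefully identifying which POCs/FPCs in the rule-composition diagram~\eqref{eq:defRcomp} play the role of $\overline{K}_1,\overline{K}_2$, and then checking that the universal morphisms they produce do indeed restrict, after pulling back along $m_{21}$, to the traditional independence morphisms $n_1,n_2$. Both directions ultimately reduce to invoking the uniqueness clause of the universal property attached to each square and to the standard pasting lemmas for POs and FPCs, but the diagrammatic verification is lengthy and must be carried out separately (though in parallel) for the DPO and SqPO cases, the only difference being PO-complements vs.\ FPCs in the construction of $\overline{K}_1,\overline{K}_2$.
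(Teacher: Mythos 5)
Your overall architecture matches the paper's: both directions are mediated by the concurrency theorem, which lets you superimpose the two-step derivation of Definition~\ref{def:tradSI} onto the rule-composition diagram~\eqref{eq:defRcomp}, and the converse direction is indeed handled by composing the relevant monomorphisms and pasting the PO/FPC squares of the first row onto the second. However, in the forward direction your stated mechanism is off: you write that ``the universal properties of the POs/POCs/FPCs \ldots{} forces unique induced morphisms $O_1\to\overline{K}_2$ and $I_2\to\overline{K}_1$'', but the universal property of a pushout only produces morphisms \emph{out of} the pushout apex and cannot induce a morphism \emph{into} a leg such as $\overline{K}_1$. The step that actually does the work in the paper is that in an adhesive category pushouts along monomorphisms are pullbacks (and likewise POCs and FPCs along monomorphisms are pullbacks), so the cone $(I_2\to N_{21},\, I_2\xrightarrow{n_2} K_1')$ over the cospan $N_{21}\to K_1'\leftarrow \overline{K}_1$ --- wait, over the relevant pullback square --- induces $I_2\to\overline{K}_1$ by the \emph{pullback} universal property, and mono-decomposition then upgrades it to a monomorphism. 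Without invoking this adhesivity fact your argument does not go through as written.

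The second, more serious issue concerns the conditions. You propose to \emph{derive} $\ac{c}_{I_{21}}\,\dot{\equiv}\,\ac{c}_{I_{12}}$ from the compatibility and compositionality of $\Shift$ and $\Trans$, given the symmetric factorisations. This cannot work: the equivalence of the two composite conditions is an \emph{additional hypothesis} in the lemma, not a consequence of the span-level independence. The paper is explicit about this --- the proof of Theorem~\ref{thm:CCC}(iv) states that ``there is no guarantee for the conditions $\ac{c}_{I_{21}}$ and $\ac{c}_{I_{12}}$ of the two composites to coincide'', which is precisely why the lemma adds the clause ``it is in addition required that $\ac{c}_{I_{21}}\,\dot{\equiv}\,\ac{c}_{I_{12}}$''. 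Two plain rules can be sequentially independent in the structural sense while their shifted and transported application conditions in the two orders are inequivalent, so the step where you ``show that \ldots{} the composite conditions agree'' would fail for general nested conditions; the correct move is simply to impose the equivalence so that both orders yield $\bT$-admissible matches.
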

\begin{proof}
  For the ``$\Rightarrow$'' direction, suppose the $\bT$-type rule applications are sequentially independent. Invoking the $\bT$-type concurrency theorem, we may construct the following diagram:
\begin{equation}\label{eq:RcompCCaux}
\inputtikz{RcompCCaux}
\end{equation}
Here, all arrows are monomorphisms, the squares marked $\mathbb{t}$ are pushout complements in the DPO- and FPCs in the SqPO-type rewriting case, while ${\color{h1color}M_{21}}=\pB{I_2\rightarrow X_1\leftarrow O_1}$, and ${\color{h1color}N_{21}}=\pO{I_2 {\color{h1color}\leftarrow M_{21}\rightarrow } O_1}$. Since pushouts along monomorphisms are pullbacks in an adhesive category, existence of the morphisms $(I_2{\color{h1color}\rightarrow N_{21}})$ and $(I_2{\color{h4color}\rightarrow}K_1')$ (the latter by assumption) entails the existence of a morphism $(I_2\rightarrow \overline{K}_1)$ such that
\[
  (K_1'\leftarrow \overline{K_1})\circ (\overline{K}_1\leftarrow I_2)=(K_1'{\color{h4color}\leftarrow}I_2)\,.
\]
By the decomposition property of monomorphisms in an adhesive category, $(I_2\rightarrow \overline{K}_1)$ is a monomorphism, too. Analogously, since POCs and FPCs along monomorphisms are also pullbacks, we may infer the existence of a monomorphism $(O_1\rightarrow \overline{K}_2)$ such that
\[
  (K_2'\leftarrow \overline{K_2})\circ (\overline{K}_2\leftarrow O_1)=(K_2'{\color{h4color}\leftarrow}O_1)\,.
\]

The statement of the ``$\Leftarrow$'' direction follows by composition of the relevant monomorphisms (to obtain the monomorphisms $(I_2\rightarrow \overline{K}_2)$ and $(O_1\rightarrow \overline{K}_2)$) and by composing PO and FPC squares (first row with the second row) to obtain the claim. 

Finally, the requirement on the conditions as stated arises from the definition of rule compositions, and is necessary so that the composites of the rules in the two sequential orders can give rise to an $\bT$-admissible match (in the sense of rules with conditions) $I_{21}\hookrightarrow X_0$. 
\end{proof}

The latter result clarifies the precise relationship between compositions and sequential applications of rules on the one hand and the notion of sequential commutativity on the other hand: two rules in a sequential rule application are independent if and only if their underlying concurrent rule composition satisfies a certain property as described above. In other words, we obtain a sharper notion of sequential independence in the latter, compositional form, since this notion characterizes sequential commutativity for an entire class of sequential rule applications (i.e.\ for all $\bT$-type sequential applications that are equivalent to applications of the $\bT$-composite rule $\compGT{\bT}{r_2}{\mu_{21}}{r_1}$). This permits us to provide a refinement of the notion of \emph{switching couples} (cf.\ e.g.~\cite{BALDAN2014}) to the level of rule compositions for rules with conditions, which is the first result of this kind for both DPO- and SqPO-rewriting:

\begin{theorem}[Compositional concurrent commutativity]\label{thm:CCC}
  Let $\bfC$ be a category satisfying Assumption~\ref{as:main}, let $R_j=(r_j,\ac{c}_{I_j})\in\LinAc{\bfC}$ be two rules with conditions, and let $\mu_{21}=(I_2\leftarrow M_{21}\rightarrow O_1)$ be a $\bT$-admissible match of $R_2$ into $R_1$ (with $\bT\in\{DPO,SqPO\}$). Then if there exist monomorphisms $(O_1\rightarrow \overline{K}_2),(I_2\rightarrow \overline{K}_1)\in\mono{\bfC}$ (with notations as in~\ref{lem:CI}), with
  \begin{equation}
    (N_{21}\leftarrow \overline{K}_2)\circ (\overline{K}_2\leftarrow O_1)=(N_{21}\leftarrow O_1)\quad \land
    \quad
    (N_{21}\leftarrow \overline{K}_1)\circ (\overline{K}_1\leftarrow I_2)=(N_{21}\leftarrow I_2)\,,
  \end{equation}
  the following statements hold:
  \begin{romanenumerate}
    \item The pullbacks
    \begin{equation}
    \inputtikz{thmCCC1}\qquad
      \inputtikz{thmCCC2}
    \end{equation}
    satisfy $M_{21}'\cong M_{21}$ and $M_{21}''\cong M_{21}$, thus furnishing monomorphisms $(a_1:M_{21}\rightarrow K_1)$ and $(a_2:M_{21}\rightarrow K_2)$.
    \item The span $\mu_{12}:=(I_1\xleftarrow{i_1\circ a_1}M_{21}\xrightarrow{o_2\circ a_2}O_2)$ is a DPO- (and thus SqPO-) admissible match of $r_1$ into $r_2$.
    \item $r_1$ and $r_2$ are sequentially independent w.r.t.\ $\mu_{12}$ in both types of rewriting, and with
    \begin{equation}
      \left(\compGT{SqPO}{r_2}{\mu_{21}}{r_1}\right)
      \cong \left(\compGT{DPO}{r_2}{\mu_{21}}{r_1}\right)
      \cong \left(\compGT{DPO}{r_1}{\mu_{12}}{r_2}\right)
      \cong \left(\compGT{SqPO}{r_1}{\mu_{12}}{r_2}\right)\,.
    \end{equation}
    \item $R_1$ and $R_2$ are sequentially independent only if in addition $\ac{c}_{I_{21}}\equiv \ac{c}_{I_{12}}$.
  \end{romanenumerate}
\end{theorem}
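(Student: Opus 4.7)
The plan is to work within the extended pushout--pullback diagram~\eqref{eq:RcompCCaux} from the proof of Lemma~\ref{lem:CI}, enriched with the hypothesized factorizations of $O_1 \to N_{21}$ through $\overline{K}_2$ and of $I_2 \to N_{21}$ through $\overline{K}_1$. Three ingredients carry the bulk of the argument: in an adhesive category, pushouts, pushout complements and FPCs along monomorphisms are all pullbacks; monomorphisms decompose uniquely through such pullback squares (Theorem~\ref{thm:propsC}); and the concurrency and associativity theorems (Theorems~\ref{thm:concur},~\ref{thm:assocR}) let me translate between sequential applications and composite rules at the structural level.

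For part~(i), I start from the fact that $M_{21}$ is by definition the pullback $I_2 \times_{N_{21}} O_1$. Restricting the cospan $(I_2 \to N_{21} \leftarrow O_1)$ along the monomorphisms $K_1 \hookrightarrow O_1$ and $K_2 \hookrightarrow I_2$, and using the $\overline{K}$-factorizations to paste intermediate pullback squares, yields two pullback computations that must coincide with $M_{21}$ by universality; these are precisely the pullbacks of the theorem statement, so $M_{21}' \cong M_{21} \cong M_{21}''$, and the induced monomorphisms $a_1 \colon M_{21} \to K_1$ and $a_2 \colon M_{21} \to K_2$ emerge as pullback projections. With $a_1, a_2$ in hand, for part~(ii) I form $\mu_{12} := (I_1 \xleftarrow{i_1 \circ a_1} M_{21} \xrightarrow{o_2 \circ a_2} O_2)$ and take its pushout to obtain $N_{12}$. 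The pushout-complement data required for DPO-admissibility of $\mu_{12}$ as a match of $r_1$ into $r_2$ can be read off from~\eqref{eq:RcompCCaux} with the roles of $r_1$ and $r_2$ swapped: the $\overline{K}$-factorizations are exactly what permits $K_1'$ and $K_2'$ to serve simultaneously as POC data in both orderings, yielding DPO-admissibility of $\mu_{12}$ and a fortiori SqPO-admissibility, since POCs along monomorphisms coincide with FPCs whenever both exist.

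For part~(iii), the four-way isomorphism follows from Theorem~\ref{thm:concur}: any two-step derivation $X_0 \Rightarrow X_1 \Rightarrow X_2$ realizing $\mu_{21}$ also realizes $\mu_{12}$ with the same endpoints, so the bijective correspondence of the concurrency theorem forces the composite rules to be isomorphic across all four combinations of DPO- and SqPO-style composition. Part~(iv) is immediate from the condition clause of Lemma~\ref{lem:CI}: an admissible match $I_{21} \hookrightarrow X_0$ for $\compGT{\bT}{R_2}{\mu_{21}}{R_1}$ corresponds to an admissible match $I_{12} \hookrightarrow X_0$ for $\compGT{\bT}{R_1}{\mu_{12}}{R_2}$ precisely when $\ac{c}_{I_{21}} \,\dot\equiv\, \ac{c}_{I_{12}}$. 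The principal technical obstacle I anticipate is part~(i): although the pullback-pasting picture is morally clear once the $\overline{K}$-factorizations are inserted into~\eqref{eq:RcompCCaux}, making the identifications $M_{21}' \cong M_{21} \cong M_{21}''$ fully rigorous requires careful bookkeeping of which subsquares are pullbacks and in what order to paste them, so as to pin down the isomorphisms canonically rather than merely up to mediating maps --- once this is done, parts~(ii)--(iv) follow relatively mechanically from the concurrency and associativity infrastructure.
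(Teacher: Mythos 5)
Your outline for parts~(i), (ii) and~(iv) follows essentially the paper's route (pullback pasting plus the universal property to get $a_1,a_2$; constructing $N_{12}$ as the pushout of $\mu_{12}$ and verifying POC existence; the counterexample-style remark for the conditions), but there are two problems. The minor one is in part~(ii): the pushout-complement data witnessing DPO-admissibility of $\mu_{12}$ are \emph{not} $K_1'$ and $K_2'$ (those live at the level of the objects $X_0,X_1,X_2$ in the two-step derivation); they are genuinely new objects that must be constructed, namely $\overline{\overline{K}}_2=\pO{K_{21}\hookleftarrow K_1\hookrightarrow I_1}$ and $\overline{\overline{K}}_1=\pO{O_2\hookleftarrow K_2\hookrightarrow K_{21}}$ with $K_{21}$ the pullback of $\overline{K}_2\hookrightarrow N_{21}\hookleftarrow \overline{K}_1$, after which pushout--pushout decomposition shows the new squares are pushouts and hence that the required POCs exist. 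Saying the existing contexts ``serve simultaneously in both orderings'' skips exactly the amalgamation step that makes part~(ii) work.

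The genuine gap is part~(iii). The concurrency theorem cannot deliver the four-way isomorphism of composites: its bijection relates a two-step derivation \emph{in a fixed order} to a one-step application of the composite for that order; it says nothing about comparing the composite rules arising from the two different orders, and an isomorphism of results $R_{21_{m_{21}}}(X_0)\cong R_{12_{m_{12}}}(X_0)$ does not entail an isomorphism of the rules $\compGT{\bT}{r_2}{\mu_{21}}{r_1}$ and $\compGT{\bT}{r_1}{\mu_{12}}{r_2}$ as spans. Worse, your premise that a derivation realizing $\mu_{21}$ ``also realizes $\mu_{12}$ with the same endpoints'' presupposes that the swapped derivation exists, i.e.\ presupposes the sequential independence that part~(iii) is asserting. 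The argument the paper uses is structural and order-free: under the hypothesized factorizations, pushout--pullback decomposition forces the square that is a POC (for DPO) or FPC (for SqPO) to actually be a pushout, which is why the SqPO- and DPO-composites along $\mu_{21}$ coincide; and the composite along $\mu_{12}$ is obtained by reassembling the very same pushout squares of the amalgamated diagram in the other order, which yields the isomorphism of all four composites together with the monomorphisms $O_2\hookrightarrow \overline{K}_1$ and $I_1\hookrightarrow \overline{K}_2$ witnessing independence w.r.t.\ $\mu_{12}$. Without this pushout-collapse observation your part~(iii) does not go through.
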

\begin{proof}
Let $K_{21}$ denote the pullback of $\overline{K}_2\hookrightarrow N_{21}\hookleftarrow\overline{K}_1$, and let the square marked $t$ in~\eqref{eq:seqIndProof} below be a pushout complement (POC) for $\bT=DPO$, and a final pullback complement (FPC) for $\bT=SqPO$, respectively. 

\paragraph{Ad~(i) and~(ii):} By assumption, there exist monomorphisms $O_1\hookrightarrow \overline{K}_2$ and $I_2\hookrightarrow \overline{K}_1$. This entails by virtue of the universal property of pullbacks the following structures (compare~\eqref{eq:seqIndProof}):
\begin{itemize}
\item Commutativity of the diagram and existence of the monomorphisms $\overline{K}_1\hookleftarrow K_1$ and $(\overline{K_2}\hookleftarrow O_1)\circ (O_1\hookleftarrow K_1)$ entail by the universal property of the pullback $\square(K_{21},\overline{K}_1,N_{21},\overline{K}_2)$ the existence of a morphism $K_{21}\leftarrow K_1$. Since $(\overline{K}_1\hookleftarrow K_1)=(\overline{K}_{21}\hookleftarrow K_{21})\circ (K_{21}\leftarrow K_1)$, with $\hookleftarrow$ indicating monomorphisms, by stability of monomorphisms under decompositions we conclude that $(K_{21}\leftarrow K_1)$ is a monomorphism. In an analogous fashion, the existence of monomorphisms $K_2\hookrightarrow \overline{K}_2$ and $M_{21}\hookrightarrow K_i$ (for $i=1,2$) may be established. This proves part (i).
\item Invoking \emph{pushout-pullback decomposition} (cf.\ e.g.\ \cite{nbSqPO2019}) repeatedly (noting that POCs and FPCs along monomorphisms are also pullbacks), we may conclude that the squares marked $t$ and $PB$ in~\eqref{eq:seqIndProof} are pushouts, as are all squares formed involving the monomorphisms previously discussed to exist (i.e.\ the morphisms marked in light blue in~\eqref{eq:seqIndProof}). 
\end{itemize}
\begin{equation}\label{eq:seqIndProof}
\vcenter{\hbox{\includegraphics[scale=0.5,page=1]{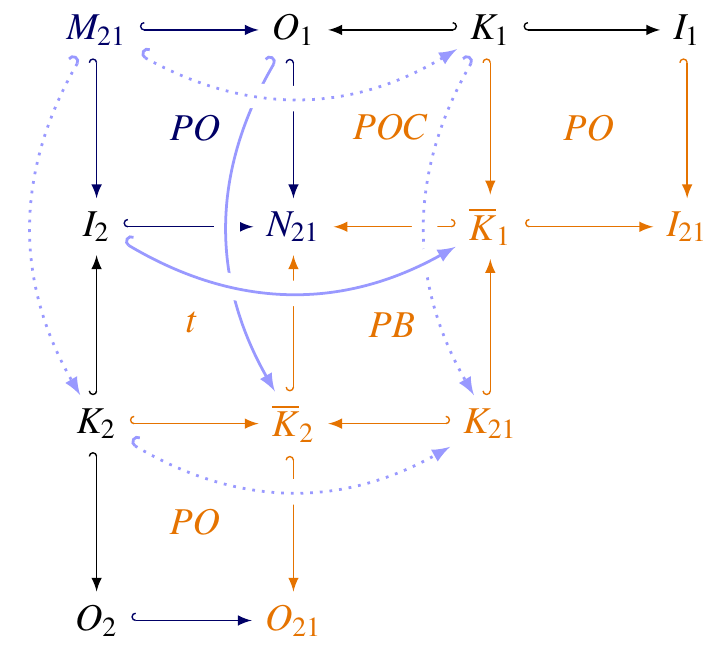}}}\;
\rightsquigarrow\;
\vcenter{\hbox{\includegraphics[scale=0.5,page=2]{images/seqIndAux.pdf}}}\;
\rightsquigarrow\;
\vcenter{\hbox{\includegraphics[scale=0.5,page=3]{images/seqIndAux.pdf}}}
\end{equation}
From hereon, we may follow the classical strategy for proving sequential commutativity in the DPO-type setting (cf.\ e.g.~\cite{ehrig2006fund}, proof of Thm.~5.12): first, we form the pushouts $\overline{\overline{K}}_2=\pO{K_{21}\hookleftarrow K_1\hookrightarrow I_1}$, $\overline{\overline{K}}_1:=\pO{O_2\hookleftarrow K_2\hookrightarrow K_{21}}$ and $N_{12}:=\pO{\overline{\overline{K}}_1\hookleftarrow K_{21}\hookrightarrow \overline{\overline{K}}_2}$, which by universal properties of pushouts and stability of monomorphisms under decompositions leads to the existence of monomorphisms $\overline{\overline{K}}_2\hookrightarrow I_{21}$ and $\overline{\overline{K}}_1\hookrightarrow O_{21}$. Since moreover by virtue of \emph{pushout-pushout decomposition} the newly formed squares involving the two aforementioned monomorphisms are found to be pushouts, we finally obtain a DPO-type composition of $r_1$ with $r_2$ along the span $\mu_{21}$ by assembling pushout squares as depicted in the last step of~\eqref{eq:seqIndProof}. This identifies the span $\mu_{12}:=(I_1\hookleftarrow M_{21}\hookrightarrow O_2)$ as a DPO- (and thus SqPO-) admissible match of $r_1$ into $r_2$, which proves part (ii). 

\paragraph{Ad~(iii):} Since we have found in the proof of part~(i) that the square marked $t$ in~\eqref{eq:seqIndProof} is a pushout whenever the monomorphisms $O_1\hookrightarrow \overline{K}_2$ and $I_2\hookrightarrow \overline{K}_1$ exist, note first that sequential compositions of rules $r_2$ and $r_1$ along a $\bT$-admissible match $\mu_{21}$ that are sequentially independent are in fact always DPO-type compositions (which for $\bT=SqPO$ is indeed a possible special case, since a pushout complement is also an FPC). Together with the construction of the DPO-type composition of $r_1$ and $r_2$ along the uniquely induced span $\mu_{12}$ as presented in the proof of part~(ii), which in particular entails the existence of monomorphisms $O_2\hookrightarrow \overline{K}_1$ and $I_1\hookrightarrow \overline{K}_2$, this provides the proof of part~(iii).

\paragraph{Ad~(iv):} The final claim follows by verifying the well-known fact that there is no guarantee for the conditions $\ac{c}_{I_{21}}$ and $\ac{c}_{I_{12}}$ of the two composites to coincide, thus concluding the proof.
\end{proof}
Note that the above statements have the peculiar consequence that two sequentially independent rules $r_2$ and $r_1$ give rise to a so-called \emph{amalgamated rule}~\cite{Boehm1987aa}, in the sense that
\begin{equation}
\begin{aligned}
  O_{21}&=\pO{O_2\hookleftarrow M_{21}\hookrightarrow O_1}\\
  K_{21}&=\pO{K_2\hookleftarrow M_{21}\hookrightarrow K_1}\\
  I_{21}&=\pO{I_2\hookleftarrow M_{21}\hookrightarrow I_1}\,.
\end{aligned}
\end{equation}
Since the theory of amalgamation has been extensively developed in the graph rewriting literature~\cite{Boehm1987aa,Golas2010aa,GOLAS2014,ehrig2014mathcal}, it might well be the case that the above result may be beneficial in the concrete implementations of tracelet analysis algorithms.

\subsection{A worked example of tracelet shift equivalence}\label{app:tlShiftEqExample}

\begin{figure}
    \centering
    \includegraphics[scale=0.3]{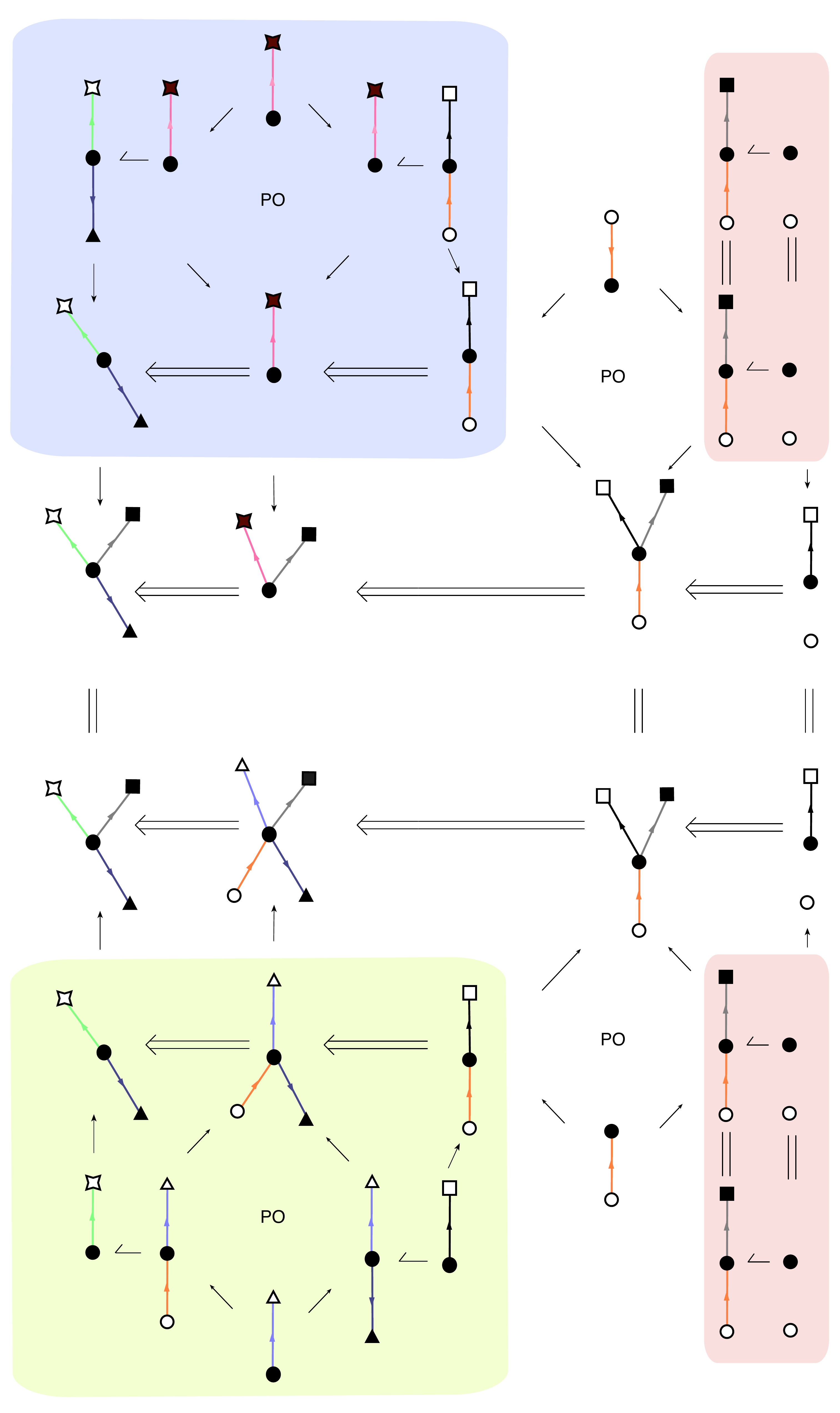}
  \caption{Illustration of a concrete example of tracelet shift equivalence, based upon the tracelet of length $3$ as presented in Figure~\ref{fig:traceletAssociativity}. The bottom half of the diagram is identical to the bottom half of Figure~\ref{fig:traceletAssociativityB}, while the top half illustrates a shift-equivalent tracelet of length $3$ in which the order of the second and third rule applications has been swapped.}\label{fig:traceletShiftEqExample}
\end{figure}

In order to provide some intuitions for the notion of tracelet shift equivalence, we present in Figure~\ref{fig:traceletShiftEqExample} a concrete example of two tracelets of length $3$ that are shift equivalent. The bottom half of the diagram coincides with the bottom half of Figure~\ref{fig:traceletAssociativityB}, while the top half of Figure~\ref{fig:traceletShiftEqExample} encodes a tracelet where the order of applications of the second and third rules has in effect been reversed. Note in particular that while the rules involved are understood as rewriting rules for finite directed (unlabeled) multigraphs, we have employed vertex symbols and edge colors in order to encode the structure of the various monomorphisms and partial maps (i.e.\ repeated symbols encode elements related by partial maps). We have moreover chosen representatives for the two tracelets such that the isomorphisms that relate the tracelets are concretely implemented by isomorphisms of the underlying rewriting rules. An essential feature of our definition of shift equivalence (Definition~\ref{def:tSeq}) is the following technical detail: for tracelets $T=t_n\vert\dotsc\vert t_1$ and $T'=t_n'\vert\dotsc\vert t_1'$ that are shift equivalent based upon subtracelets $t_j\vert\dotsc \vert t_{j-k}$ and $t_j'\vert\dotsc \vert t_{j-k}'$, part of the definition entails that we demand the existence of an isomorphism between $t_n\vert\dotsc \vert t_{(j\vert\dotsc \vert j-k)}\vert \dotsc \vert t_1$ and $t_n'\vert\dotsc \vert t_{(j\vert\dotsc \vert j-k)}'\vert \dotsc \vert t_1'$. However, we do \emph{not} demand an isomorphism between the original tracelets $T$ and $T'$, which would only exist in the special case where the subtracelets encode sequentially independent derivations in the traditional sense. This feature is illustrated explicitly in Figure~\ref{fig:traceletShiftEqExample}, where the minimal derivation traces encoded by the two tracelets of length $3$ are in fact \emph{not} in isomorphism (due to the non-existence of an isomorphism of the ``X-shaped'' respective third objects in the minimal traces that would be compatible with the morphism structure of the traces), but only the minimal derivation traces of the tracelets of length $2$ given by $t_{(3\vert 2)}\vert t_1$ and $t_{(3\vert 2)}'\vert t_1'$, respectively. Here, the tracelet $T_{(3\vert 2)}'$ of length $2$ that leads to $t_{(3\vert 2)}'$ is depicted in the light blue box, while $T_{(3\vert 2)}$ leading to $t_{(3\vert 2)}$ is depicted in the light yellow box. It is this particular feature that deserves to refer to the process of abstracting tracelets by means of tracelet shift equivalence as a form of \emph{strong compression} in the sense of~\cite{danos2012graphs}.

\end{document}